\documentclass[pdflatex,sn-mathphys-num]{sn-jnl}

\usepackage{graphicx}%
\usepackage{multirow}%
\usepackage{amsmath,amssymb,amsfonts}%
\usepackage{amsthm}%
\usepackage{mathrsfs}%
\usepackage[title]{appendix}%
\usepackage{xcolor}%
\usepackage{textcomp}%
\usepackage{manyfoot}%
\usepackage{booktabs}%
\usepackage{algorithm}%
\usepackage{algorithmicx}%
\usepackage{algpseudocode}%
\usepackage{listings}%
\usepackage{todonotes}

\theoremstyle{thmstyleone}%
\newtheorem{theorem}{Theorem}
\newtheorem{proposition}[theorem]{Proposition}%

\theoremstyle{thmstyletwo}%
\newtheorem{example}{Example}%
\newtheorem{corollary}{Corollary}%
\newtheorem{lemma}{Lemma}%

\theoremstyle{thmstylethree}%
\newtheorem{definition}{Definition}%

\begin{document}

\title[Weight Distributions of  Single Parity-Check Product Codes via Character Sums]{Weight Distributions of  Single Parity-Check Product Codes via Character Sums}


\author[1]{\fnm{Makson Miller } \sur{Ribeiro Alves}}\email{m226079@dac.unicamp.br}

\author*[2]{\fnm{Sara} \sur{D. Cardell}}\email{sd.cardell@unesp.br}
\equalcont{These authors contributed equally to this work.}

\affil[1]{  \orgname{Secretária de Educação do Estado de São Paulo}, \orgaddress{\state{São Paulo}, \country{Brazil}}}

\affil*[2]{\orgdiv{Department of Mathematics, Institute of Geosciences and Exact Sciences}, \orgname{São Paulo State University (UNESP)},  \city{Rio Claro},  \state{SP}, \country{Brazil}}

\abstract{We investigate structural and enumerative properties of
binary single parity-check product codes. For each $n\geq 2$,
$\operatorname{SPC}(n)$ denotes the binary single parity-check code of
length $n$, consisting of all binary vectors of length $n$ having even
Hamming weight. We determine the generalized Hamming weight hierarchy
of the product code
$\mathcal{C}_{m,n}
=\operatorname{SPC}(m)\otimes\operatorname{SPC}(n)$,
whose codewords can be represented as $m\times n$ binary matrices in
which every row and every column has even Hamming weight. For the square
product
$\mathcal{C}_n
=\operatorname{SPC}(n)\otimes\operatorname{SPC}(n)$,
we also determine the maximum codeword weight and prove that its
homogeneous weight enumerator is symmetric if and only if $n$ is even.
After characterizing the dual code, we apply the MacWilliams identity
in its Walsh--Hadamard formulation to derive an exact closed-form
expression for the weight enumerator. By grouping the auxiliary binary
vectors according to their Hamming weights, we obtain an explicit
formula for each coefficient in terms of binomial coefficients and
alternating convolutions. Finally, using Krawtchouk polynomials, we
present an exact procedure for computing the full weight distribution
without exhaustively enumerating all codewords. Numerical examples
illustrate the formulas and verify the resulting computations.}

\keywords{Product codes, single parity-check codes, weight enumerators,
weight distributions, character sums, Walsh--Hadamard transform,
generalized Hamming weights.}

\pacs[MSC Classification]{94B05, 05A15, 43A25}

\maketitle

\section{Introduction}

The weight distribution is one of the fundamental invariants of a
linear code. It records the number of codewords of each Hamming weight
and, in particular, determines the minimum distance of the code. Weight
enumerators also play an important role in the study of duality,
decoding performance, and combinatorial identities associated with
linear codes (see, for example,
\cite{MacWilliamsSloane1977,Pless2003}). Despite their importance,
determining the full weight distribution of a linear code is generally
a difficult problem. Indeed, a direct computation for a binary linear
code of dimension $K$ may require examining all its $2^K$ codewords,
and the structure of a generator or parity-check matrix does not
usually lead immediately to a closed formula.

Product codes, introduced by Elias~\cite{Elias1954}, form a natural
family of linear codes obtained by imposing the constraints of shorter
component codes along different dimensions of an array. The binary
single parity-check code of length $n$, denoted by
$\operatorname{SPC}(n)$, consists of all vectors in
$\mathbb{F}_2^n$ having even Hamming weight. It has parameters
$[n,n-1,2]_2$. For integers $m,n\geq2$, we consider the product code
$
\mathcal{C}_{m,n}
=
\operatorname{SPC}(m)\otimes\operatorname{SPC}(n).
$
This is a binary linear code with parameters
$[mn,(m-1)(n-1),4]_2$, whose codewords can be represented as
$m\times n$ binary matrices in which every row and every column has
even Hamming weight. When $m=n$, we write
$\mathcal{C}_n=\mathcal{C}_{n,n}$.

Single parity-check product codes have been studied in connection with
iterative decoding and their performance over different communication
channels
\cite{rankin2001single,CaireTariccoBattail1995,Cardell2016b,Cardell2019b}.
Several works have also addressed weight enumeration for product codes.
Tolhuizen~\cite{tolhuizen2002more} characterized certain families of
low-weight codewords and expressed their multiplicities in terms of the
weight distributions of the component codes. El-Khamy and
Garello~\cite{ElKhamyGarello2006} observed that the full weight
enumerator remains unknown for most product codes and developed
average weight-enumerator approximations. The weight distribution and
asymptotic behavior of iterated single parity-check product codes were
studied by Caire, Taricco, and Battail
\cite{CaireTariccoBattail1995}. These results illustrate both the
importance of weight enumeration for product codes and the difficulty
of obtaining exact finite-length formulas.

Another invariant considered in this work is the generalized Hamming
weight hierarchy. Generalized Hamming weights, introduced by
Wei~\cite{Wei1991}, extend the notion of minimum distance by measuring
the smallest support of a subcode of a prescribed dimension. They
provide a finer description of the support structure of a linear code
and have applications, among others, to information security and
wiretap channels. For product codes whose components satisfy the chain
condition, a formula for the generalized Hamming weights was conjectured
by Wei and Yang~\cite{WeiYang1993} and proved by
Schaathun~\cite{Schaathun2000}; see also the extension obtained by
Martínez-Pérez and Willems~\cite{MartinezPerezWillems2004}. Since binary
single parity-check codes satisfy this condition, the formula provides
an explicit description of the complete generalized Hamming weight
hierarchy of their product codes.

Determining the coefficient $A_w$ in the weight enumerator of
$\mathcal{C}_{m,n}$ is equivalent to counting the $m\times n$ binary
matrices containing exactly $w$ entries equal to one and having even
row and column sums. Although each individual parity condition is
elementary, their simultaneous imposition makes the enumeration
nontrivial. Every matrix entry participates in both a row constraint
and a column constraint, and the complete family of $m+n$ parity
equations contains a global linear dependency. Moreover, direct
enumeration requires considering all
$2^{(m-1)(n-1)}$ codewords, which rapidly becomes impractical as the
component lengths increase.

We first apply the product formula for codes satisfying the chain
condition to determine the generalized Hamming weight hierarchy of
$\mathcal{C}_{m,n}$. For the square product $\mathcal{C}_n$, we also
determine the maximum possible Hamming weight of a codeword. It is
equal to $n^2$ when $n$ is even and to $n^2-n$ when $n$ is odd. As a
consequence, we prove that the homogeneous weight enumerator of
$\mathcal{C}_n$ is symmetric if and only if $n$ is even.

Our main enumerative result is an exact closed-form expression for the
weight enumerator of $\mathcal{C}_{m,n}$. We first characterize its
dual code and then apply the MacWilliams identity in its
Walsh--Hadamard formulation. The dual codewords are parametrized by two
auxiliary binary vectors, and their Hamming weights depend only on the
weights of these vectors. Grouping the corresponding terms according
to these two weights reduces the expression to a double sum and yields
explicit formulas for both the homogeneous and one-variable weight
enumerators. Expanding the resulting expression also provides a formula
for every coefficient $A_w$ in terms of binomial coefficients and
alternating convolutions.

The structure of the coefficient formula leads to an exact
computational procedure. Terms producing the same contribution are
first grouped together, after which the coefficients are evaluated
using a recurrence for binary Krawtchouk polynomials. For a code of
length $N=mn$, the resulting procedure requires at most $O(N^2)$
arithmetic operations and $O(N)$ auxiliary memory, while avoiding the
enumeration of the $2^{(m-1)(n-1)}$ codewords. Numerical examples for
different component lengths illustrate the formulas and provide
consistency checks for the computed weight distributions.

The remainder of the paper is organized as follows. We first recall the
necessary background on linear codes, single parity-check codes, product
codes, generalized Hamming weights, additive characters, the
Walsh--Hadamard transform, and the MacWilliams identity. We then study
the generalized Hamming weights, maximum weight, and symmetry properties
of single parity-check product codes. Next, we derive the closed-form
expression for the weight enumerator and its coefficients. The
computational procedure based on Krawtchouk polynomials is subsequently
presented together with numerical examples. Finally, we summarize the
main results and indicate possible directions for further research.

\section{Preliminaries}\label{sec:preliminaries}
\subsection{Linear Codes}

Let $q$ be a prime power and let $\mathbb{F}_q$ denote the finite
field with $q$ elements. A linear code $\mathcal{C}$ of length $n$
over $\mathbb{F}_q$ is a vector subspace of $\mathbb{F}_q^n$. If
$\dim_{\mathbb{F}_q}(\mathcal{C})=k$, then $\mathcal{C}$ is called an
$[n,k]_q$ linear code.

A generator matrix of $\mathcal{C}$ is a matrix
$G\in\mathbb{F}_q^{k\times n}$ whose rows form a basis of
$\mathcal{C}$. Thus,
\[
\mathcal{C}
=
\left\{
\mathbf{u}G:
\mathbf{u}\in\mathbb{F}_q^k
\right\},
\]
and every codeword $\mathbf{c}\in\mathcal{C}$ can be written uniquely
as
$
\mathbf{c}=\mathbf{u}G,
$
for some $\mathbf{u}\in\mathbb{F}_q^k$.

A parity-check matrix of $\mathcal{C}$ is a matrix
$H\in\mathbb{F}_q^{(n-k)\times n}$ of rank $n-k$ such that
\[
\mathcal{C}
=
\ker(H)
=
\left\{
\mathbf{c}\in\mathbb{F}_q^n:
H\mathbf{c}^{T}=\mathbf{0}
\right\}.
\]
Consequently, the generator and parity-check matrices satisfy $
GH^{T}=0.$
The parity-check matrix describes the parity constraints satisfied by
the codewords and plays a fundamental role in syndrome decoding \cite{Pless2003}.

The dual code of $\mathcal{C}$ is defined by
$$\mathcal{C}^{\perp}=\{\mathbf{x}\in\mathbb{F}_q^n:\mathbf{x}\cdot\mathbf{c}^t=0\text{ for every }\mathbf{c}\in\mathcal{C}\}.$$
The rows of $H$ form a basis of the dual code $\mathcal{C}^{\perp}$.

\begin{definition}
Let
$\mathbf{x}=(x_1,\ldots,x_n)\in\mathbb{F}_q^n$. The \emph{support} of
$\mathbf{x}$ is the set
\[
\operatorname{supp}(\mathbf{x})
=
\left\{
i\in\{1,\ldots,n\}:x_i\neq 0
\right\}.
\]
The \emph{Hamming weight} of $\mathbf{x}$, denoted by
$\operatorname{wt}(\mathbf{x})$, is the cardinality of its support:
\[
\operatorname{wt}(\mathbf{x})
=
\left|\operatorname{supp}(\mathbf{x})\right|.
\]
\end{definition}

\begin{definition}
The \emph{Hamming distance} between two vectors
$\mathbf{x},\mathbf{y}\in\mathbb{F}_q^n$ is defined by
\[
d_H(\mathbf{x},\mathbf{y})
=
\left|
\left\{
i\in\{1,\ldots,n\}:x_i\neq y_i
\right\}
\right|.
\]
Equivalently,
$
d_H(\mathbf{x},\mathbf{y})
=
\operatorname{wt}(\mathbf{x}-\mathbf{y}).
$
\end{definition}

The Hamming distance is translation invariant; that is,
$
d_H(\mathbf{x}+\mathbf{z},\mathbf{y}+\mathbf{z})
=
d_H(\mathbf{x},\mathbf{y})
$
for all $\mathbf{x},\mathbf{y},\mathbf{z}\in\mathbb{F}_q^n$. In
particular,
$
d_H(\mathbf{x},\mathbf{0})
=
\operatorname{wt}(\mathbf{x}).
$

\begin{definition}
The \emph{minimum Hamming distance} of a linear code $\mathcal{C}$ is
defined by
\[
d(\mathcal{C})
=
\min_{\substack{
\mathbf{x},\mathbf{y}\in\mathcal{C}\\
\mathbf{x}\neq\mathbf{y}
}}
d_H(\mathbf{x},\mathbf{y}).
\]
\end{definition}

Since $\mathcal{C}$ is linear and the Hamming distance is translation
invariant, its minimum distance can be determined from the nonzero
codewords:
\[
d(\mathcal{C})
=
\min\left\{
\operatorname{wt}(\mathbf{c}):
\mathbf{c}\in\mathcal{C}\setminus\{\mathbf{0}\}
\right\}.
\]
If $d(\mathcal{C})=d$, then $\mathcal{C}$ is called an
$[n,k,d]_q$ linear code.

\begin{definition}
Let $\mathcal{D}\leq\mathcal{C}$ be a linear subcode. The
\emph{support} of $\mathcal{D}$ is defined as
\[
\operatorname{supp}(\mathcal{D})
=
\bigcup_{\mathbf{c}\in\mathcal{D}}
\operatorname{supp}(\mathbf{c}).
\]
Equivalently,
\[
\operatorname{supp}(\mathcal{D})
=
\left\{
i\in\{1,\ldots,n\}:
\text{there exists }\mathbf{c}\in\mathcal{D}
\text{ such that }c_i\neq 0
\right\}.
\]
\end{definition}

The notion of minimum Hamming distance can be generalized by
considering subcodes of prescribed dimensions. Generalized Hamming
weights were introduced by Wei~\cite{Wei1991}.

\begin{definition}
Let $\mathcal{C}$ be an $[n,k,d]_q$ linear code. For
$1\leq r\leq k$, the \emph{$r$-th generalized Hamming weight} of
$\mathcal{C}$ is defined by
\[
d_r(\mathcal{C})
=
\min\left\{
\left|\operatorname{supp}(\mathcal{D})\right|:
\mathcal{D}\leq\mathcal{C},
\ \dim_{\mathbb{F}_q}(\mathcal{D})=r
\right\}.
\]
\end{definition}

The sequence
\[
\left(
d_1(\mathcal{C}),
d_2(\mathcal{C}),
\ldots,
d_k(\mathcal{C})
\right)
\]
is called the \emph{weight hierarchy}, or the
\emph{generalized Hamming weight hierarchy}, of $\mathcal{C}$. The
first generalized Hamming weight coincides with the minimum Hamming
distance, $
d_1(\mathcal{C})=d(\mathcal{C}).
$
Moreover, the generalized Hamming weights satisfy
\[
d_1(\mathcal{C})
<
d_2(\mathcal{C})
<
\cdots
<
d_k(\mathcal{C}),
\]
and
\[
d_k(\mathcal{C})
=
\left|\operatorname{supp}(\mathcal{C})\right|
\leq n.
\]

\begin{definition}
Let $\mathcal{C}$ be an $[n,k,d]_q$ linear code. For each
$0\leq w\leq n$, let
\[
A_w
=
\left|
\left\{
\mathbf{c}\in\mathcal{C}:
\operatorname{wt}(\mathbf{c})=w
\right\}
\right|.
\]
The sequence $(A_0,A_1,\ldots,A_n)$ is called the
\emph{weight distribution} of $\mathcal{C}$. The \emph{homogeneous
weight enumerator} of $\mathcal{C}$ is the polynomial
\[
W_{\mathcal{C}}(x,y)
=
\sum_{w=0}^{n}
A_w x^{n-w}y^w.
\]
\end{definition}

The homogeneous weight enumerator can also be expressed in one-variable form by setting $x=1$ and $y=z$:
$$
W_{\mathcal{C}}(z)
=
W_{\mathcal{C}}(1,z)
=
\sum_{w=0}^{n}A_wz^w.
$$


\subsection{Single Parity-Check Codes}
 
Let $n\geq 2$. The binary \emph{Single Parity-Check Code} of length
$n$, denoted by $\operatorname{SPC}(n)$, is defined as
\[
\operatorname{SPC}(n)
=
\left\{
\mathbf{x}\in\mathbb{F}_2^n:
\sum_{i=1}^{n}x_i=0
\right\}.
\]
Since the sum is computed over $\mathbb{F}_2$, a vector belongs to
$\operatorname{SPC}(n)$ if and only if it has even Hamming weight.

A parity-check matrix of $\operatorname{SPC}(n)$ is
\[
H
=
\begin{bmatrix}
1 & 1 & \cdots & 1
\end{bmatrix}.
\]
Since $\operatorname{rank}(H)=1$, the code has dimension $n-1$. A
generator matrix is given by
\[
G
=
\left[
I_{n-1}
\mid
\mathbf{1}_{n-1}
\right],
\]
where $I_{n-1}$ is the identity matrix of order $n-1$ and
$\mathbf{1}_{n-1}$ is the all-ones column vector of length $n-1$.

Every nonzero codeword of $\operatorname{SPC}(n)$ has even Hamming
weight and therefore has weight at least two. Moreover, the code
contains codewords of weight two. Hence,
the minimum distance of the code  is $d=2$,
and $\operatorname{SPC}(n)$ is an $[n,n-1,2]_2$ linear code over~$\mathbb{F}_2$.


\begin{theorem}
The weight enumerator polynomial of the code $\operatorname{SPC}(n)$
is given by
\[
W_{\operatorname{SPC}(n)}(x,y)
=
\sum_{i=0}^{\lfloor n/2\rfloor}
\binom{n}{2i}x^{n-2i}y^{2i}.
\]
\end{theorem}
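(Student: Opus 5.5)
The plan is a direct count from the definition of $\operatorname{SPC}(n)$ as the set of even-weight vectors, with no duality theory needed. By definition $A_w=|\{\mathbf{c}\in\operatorname{SPC}(n):\operatorname{wt}(\mathbf{c})=w\}|$, and I would compute this number separately for each $w$.

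First, $\mathbf{x}\in\operatorname{SPC}(n)$ exactly when $\sum_i x_i=0$ in $\mathbb{F}_2$. Over $\mathbb{F}_2$ this sum equals $\operatorname{wt}(\mathbf{x})\bmod 2$, so membership is equivalent to $\operatorname{wt}(\mathbf{x})$ being even. Hence $A_w=0$ for odd $w$.

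Second, for even $w=2i$ with $0\le 2i\le n$, every vector of weight $2i$ lies in the code. A binary vector is determined by its support, so these vectors correspond bijectively to the $2i$-element subsets of $\{1,\dots,n\}$. This gives $A_{2i}=\binom{n}{2i}$.

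Substituting into $W_{\mathcal{C}}(x,y)=\sum_w A_w x^{n-w}y^w$ leaves only the even indices $w=2i$ with $0\le i\le\lfloor n/2\rfloor$, which is exactly the stated formula. As a consistency check, the coefficients sum to $2^{n-1}$, matching the dimension $n-1$. Alternatively, one could note $W(x,y)=\tfrac12\bigl[(x+y)^n+(x-y)^n\bigr]$, either directly or via MacWilliams from the repetition-code dual. There is no real obstacle here; the only point needing care is the identification of the parity sum with the weight mod $2$.
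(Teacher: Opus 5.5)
Your proof is correct and follows the same route as the paper: both identify $\operatorname{SPC}(n)$ with the set of even-weight vectors and count $\binom{n}{2i}$ codewords of weight $2i$. Your additional remarks (the coefficient sum $2^{n-1}$ and the closed form $\tfrac12[(x+y)^n+(x-y)^n]$) are correct, though the argument does not need them.
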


\begin{proof}
The result follows from the fact that $\operatorname{SPC}(n)$ consists
of all binary words of length $n$ and even Hamming weight. For each
$i$ satisfying
\[
0\leq i\leq\left\lfloor\frac{n}{2}\right\rfloor,
\]
the number of codewords of weight $2i$ is
\[
\binom{n}{2i}.
\]
Since each codeword $\mathbf{c}$ contributes the term
\[
x^{n-\operatorname{wt}(\mathbf{c})}y^{\operatorname{wt}(\mathbf{c})}
\]
to the weight enumerator, the stated expression follows.
\end{proof}




\subsection{ SPC product codes}

Product codes were introduced by Elias~\cite{Elias1954} as a method
for constructing long block codes from shorter component codes by
imposing coding constraints along different dimensions of a
multidimensional array.

Let $C_1$ and $C_2$ be binary linear codes with parameters
$[n_1,k_1,d_1]_2$ and $[n_2,k_2,d_2]_2$, and let $G_1$ and $G_2$ be
generator matrices of $C_1$ and $C_2$, respectively. The
\emph{product code} of $C_1$ and $C_2$, denoted by
$C_1\otimes C_2$, is the linear code generated by
\[
G_1\otimes G_2.
\]
It has parameters
$
[n_1n_2,k_1k_2,d_1d_2]_2.
$

After arranging its coordinates row by row, a codeword of
$C_1\otimes C_2$ can be represented as an
$n_1\times n_2$ binary matrix
$
X=[x_{ij}].
$
Under this convention, every row of $X$ is a codeword of $C_2$, and
every column of $X$ is a codeword of $C_1$. More precisely, every
codeword matrix can be written as
\[
X=G_1^{T}UG_2,
\]
for some $U\in\mathbb{F}_2^{k_1\times k_2}$.

Let $H_1$ and $H_2$ be parity-check matrices of $C_1$ and $C_2$,
respectively. Then $X$ represents a codeword of $C_1\otimes C_2$ if
and only if
\[
H_1X=0
\qquad\text{and}\qquad
XH_2^{T}=0.
\]
The first equation imposes the parity-check constraints of $C_1$ on
every column of $X$, whereas the second imposes those of $C_2$ on
every row.

With respect to the row-wise ordering of the entries of $X$, these
constraints are represented by the possibly redundant parity-check
matrix
\[
\mathcal{H}(C_1,C_2)
=
\begin{bmatrix}
I_{n_1}\otimes H_2\\
H_1\otimes I_{n_2}
\end{bmatrix}.
\]
Its rank is $
\operatorname{rank}\bigl(\mathcal{H}(C_1,C_2)\bigr)
=
n_1n_2-k_1k_2.
$
Consequently, the number of redundant rows in
$\mathcal{H}(C_1,C_2)$ is
$
(n_1-k_1)(n_2-k_2).
$

We now specialize this construction to two single parity-check codes.
Let $n,m\geq 2$ and consider
\[
\mathcal{C}_{n,m}
=
\operatorname{SPC}(n)\otimes\operatorname{SPC}(m).
\]
Since the component codes have parameters
$
[n,n-1,2]_2
$ and $
[m,m-1,2]_2,
$
respectively, the product code has parameters
$
[nm,(n-1)(m-1),4]_2.
$

A codeword of $\mathcal{C}_{n,m}$ is represented as an
$n\times m$ binary matrix
$
X=[x_{ij}],
$
such that every row and every column has even Hamming weight.
Equivalently,
\[
\sum_{j=1}^{m}x_{ij}=0,
\
1\leq i\leq n,
\quad \text{and} \quad 
\sum_{i=1}^{n}x_{ij}=0,
\
1\leq j\leq m.
\]

Let
\[
H_n
=
\begin{bmatrix}
1&1&\cdots&1
\end{bmatrix}
\in\mathbb{F}_2^{1\times n},
\]
\[
H_m
=
\begin{bmatrix}
1&1&\cdots&1
\end{bmatrix}
\in\mathbb{F}_2^{1\times m},
\]
be parity-check matrices of $\operatorname{SPC}(n)$ and
$\operatorname{SPC}(m)$, respectively. The row and column parity
constraints of $\mathcal{C}_{n,m}$ are represented by
\[
\mathcal{H}_{n,m}
=
\begin{bmatrix}
I_n\otimes H_m\\
H_n\otimes I_m
\end{bmatrix}
=
\left[
\begin{array}{ccccc}
\mathbf{1}_{1\times m}
    &0&\cdots&0&0\\
0&\mathbf{1}_{1\times m}
    &\cdots&0&0\\
\vdots&\vdots&\ddots&\vdots&\vdots\\
0&0&\cdots&0&\mathbf{1}_{1\times m}\\
\hline
I_m&I_m&\cdots&I_m&I_m
\end{array}
\right],
\]
where the upper part contains $n$ rows and
$\mathbf{1}_{1\times m}$ denotes the all-ones row vector of length
$m$. The first $n$ rows impose the parity constraints on the rows of
$X$, while the last $m$ rows impose the parity constraints on its
columns.

The matrix $\mathcal{H}_{n,m}$ has size
$
(n+m)\times nm.
$
Its rows satisfy exactly one linear dependency. Indeed, over
$\mathbb{F}_2$, the sum of all row-parity constraints is equal to the
sum of all column-parity constraints. Therefore,
\[
\operatorname{rank}(\mathcal{H}_{n,m})
=
n+m-1.
\]
Thus, a full-rank parity-check matrix for $\mathcal{C}_{n,m}$ can be
obtained by removing any one row from $\mathcal{H}_{n,m}$.


\begin{example}\label{ex:G4}
For $n=m=4$, the code
\[
\mathcal{C}_{4}
=
\operatorname{SPC}(4)\otimes\operatorname{SPC}(4)
\]
has parameters
$
[16,9,4]_2.
$
Its redundant parity-check matrix is
\[
\mathcal{H}_{4,4}
=
\left[
\begin{array}{cccc|cccc|cccc|cccc}
1&1&1&1&0&0&0&0&0&0&0&0&0&0&0&0\\
0&0&0&0&1&1&1&1&0&0&0&0&0&0&0&0\\
0&0&0&0&0&0&0&0&1&1&1&1&0&0&0&0\\
0&0&0&0&0&0&0&0&0&0&0&0&1&1&1&1\\\hline
1&0&0&0&1&0&0&0&1&0&0&0&1&0&0&0\\
0&1&0&0&0&1&0&0&0&1&0&0&0&1&0&0\\
0&0&1&0&0&0&1&0&0&0&1&0&0&0&1&0\\
0&0&0&1&0&0&0&1&0&0&0&1&0&0&0&1
\end{array}
\right].
\]
The first four rows impose the parity constraints on the rows of the
codeword matrix, while the last four impose those on its columns.
Since these eight parity-check equations contain one linear
dependency,
\[
\operatorname{rank}(\mathcal{H}_{4,4})=7.
\]
Consequently, removing any one row from $\mathcal{H}_{4,4}$ produces
a full-rank $7\times16$ parity-check matrix.
Moreover, a generator matrix of $\mathcal{C}_{4}$ is
\[
G_{4,4}
=
G_{\operatorname{SPC}(4)}
\otimes
G_{\operatorname{SPC}(4)}
=
\begin{bmatrix}
1&0&0&1\\
0&1&0&1\\
0&0&1&1
\end{bmatrix}
\otimes 
\begin{bmatrix}
1&0&0&1\\
0&1&0&1\\
0&0&1&1
\end{bmatrix}=
\left[
\begin{array}{cccc|cccc|cccc|cccc}
1&0&0&1&0&0&0&0&0&0&0&0&1&0&0&1\\
0&1&0&1&0&0&0&0&0&0&0&0&0&1&0&1\\
0&0&1&1&0&0&0&0&0&0&0&0&0&0&1&1\\\hline
0&0&0&0&1&0&0&1&0&0&0&0&1&0&0&1\\
0&0&0&0&0&1&0&1&0&0&0&0&0&1&0&1\\
0&0&0&0&0&0&1&1&0&0&0&0&0&0&1&1\\\hline
0&0&0&0&0&0&0&0&1&0&0&1&1&0&0&1\\
0&0&0&0&0&0&0&0&0&1&0&1&0&1&0&1\\
0&0&0&0&0&0&0&0&0&0&1&1&0&0&1&1
\end{array}
\right].
\]
\end{example}

\subsection{Characters and the Walsh--Hadamard Transform}

Let $\mathbb{G}$ be a finite additive abelian group. A
\emph{character} of $\mathbb{G}$ is a group homomorphism
\[
\chi\colon\mathbb{G}\longrightarrow\mathbb{C}^{\times},
\]
where
$
\mathbb{C}^{\times}
=
\mathbb{C}\setminus\{0\}
$
is the multiplicative group of nonzero complex numbers. Thus,
\[
\chi(x+y)=\chi(x)\chi(y),
\]
for every $x,y\in\mathbb{G}$. The set of all characters of
$\mathbb{G}$, denoted by $\widehat{\mathbb{G}}$, forms a group under
pointwise multiplication and is called the \emph{dual group} of
$\mathbb{G}$.

Consider now the additive group $\mathbb{F}_2^n$. For each
$\mathbf{a}\in\mathbb{F}_2^n$, define
\[
\chi_{\mathbf{a}}(\mathbf{x})
=
(-1)^{\mathbf{a}\cdot\mathbf{x}},
\qquad
\mathbf{x}\in\mathbb{F}_2^n,
\]
where
\[
\mathbf{a}\cdot\mathbf{x}
=
\sum_{i=1}^{n}a_i x_i
\in\mathbb{F}_2
\]
is the standard inner product over $\mathbb{F}_2$.

Every character of $\mathbb{F}_2^n$ is of this form. Indeed, since
$2\mathbf{x}=\mathbf{0}$ for every $\mathbf{x}\in\mathbb{F}_2^n$, any
character $\chi$ satisfies
\[
\chi(\mathbf{x})^2
=
\chi(2\mathbf{x})
=
\chi(\mathbf{0})
=
1.
\]
Therefore, every character takes values in $\{1,-1\}$. Moreover, the
map
\[
\mathbf{a}\longmapsto\chi_{\mathbf{a}}
\]
defines an isomorphism between $\mathbb{F}_2^n$ and its dual group
$\widehat{\mathbb{F}_2^n}$.

The characters satisfy the orthogonality relation
\[
\sum_{\mathbf{x}\in\mathbb{F}_2^n}
\chi_{\mathbf{a}}(\mathbf{x})
\overline{\chi_{\mathbf{b}}(\mathbf{x})}
=
\sum_{\mathbf{x}\in\mathbb{F}_2^n}
(-1)^{(\mathbf{a}+\mathbf{b})\cdot\mathbf{x}}
=
\begin{cases}
2^n, & \mathbf{a}=\mathbf{b},\\
0,   & \mathbf{a}\neq\mathbf{b}.
\end{cases}
\]
Notice that the complex conjugate can be omitted in this case, since
the characters take only real values. For further details, see, for
example, \cite{rudin2017fourier,odonnell2014analysis}.

\begin{definition}[Walsh--Hadamard transform]
Let $f\colon\mathbb{F}_2^n\to\mathbb{C}$. The
\emph{Walsh--Hadamard transform} of $f$ is the function
$\widehat{f}\colon\mathbb{F}_2^n\to\mathbb{C}$ defined by
\[
\widehat{f}(\mathbf{a})
=
\sum_{\mathbf{x}\in\mathbb{F}_2^n}
f(\mathbf{x})(-1)^{\mathbf{a}\cdot\mathbf{x}}.
\]
\end{definition}

This is the unnormalized Fourier transform on the additive group
$\mathbb{F}_2^n$. By the orthogonality of the characters, its
inversion formula is
\[
f(\mathbf{x})
=
\frac{1}{2^n}
\sum_{\mathbf{a}\in\mathbb{F}_2^n}
\widehat{f}(\mathbf{a})
(-1)^{\mathbf{a}\cdot\mathbf{x}}.
\]

Let $\mathcal{C}$  be a binary linear code in $\mathbb{F}_2^n$. Its dual
code is defined by
\[
\mathcal{C}^{\perp}
=
\left\{
\mathbf{a}\in\mathbb{F}_2^n:
\mathbf{a}\cdot\mathbf{x}=0
\text{ for every }\mathbf{x}\in\mathcal{C}
\right\}.
\]
We denote by $\mathbf{1}_{\mathcal{C}}$ the indicator function of
$\mathcal{C}$, defined by
\[
\mathbf{1}_{\mathcal{C}}(\mathbf{x})
=
\begin{cases}
1, & \mathbf{x}\in\mathcal{C},\\
0, & \mathbf{x}\notin\mathcal{C}.
\end{cases}
\]

The following standard result describes the Walsh--Hadamard transform
of the indicator function of a linear code.

\begin{proposition}\label{prop:WH-indicator-code}
Let $\mathcal{C}$ be a binary linear code in $\mathbb{F}_2^n$. Then
\[
\widehat{\mathbf{1}_{\mathcal{C}}}(\mathbf{a})
=
\sum_{\mathbf{x}\in\mathcal{C}}
(-1)^{\mathbf{a}\cdot\mathbf{x}}
=
\begin{cases}
|\mathcal{C}|, & \mathbf{a}\in\mathcal{C}^{\perp},\\
0,             & \mathbf{a}\notin\mathcal{C}^{\perp}.
\end{cases}
\]
Equivalently,
$
\widehat{\mathbf{1}_{\mathcal{C}}}
=
|\mathcal{C}|\mathbf{1}_{\mathcal{C}^{\perp}}.
$
\end{proposition}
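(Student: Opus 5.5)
The plan is to split into the two cases $\mathbf{a}\in\mathcal{C}^{\perp}$ and $\mathbf{a}\notin\mathcal{C}^{\perp}$, and in the second case to read the sum as a character sum over the finite group $\mathcal{C}$. The first equality is just the definition of the Walsh--Hadamard transform applied to $f=\mathbf{1}_{\mathcal{C}}$: every term with $\mathbf{x}\notin\mathcal{C}$ vanishes, so $\widehat{\mathbf{1}_{\mathcal{C}}}(\mathbf{a})=\sum_{\mathbf{x}\in\mathcal{C}}(-1)^{\mathbf{a}\cdot\mathbf{x}}$.

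\emph{Case $\mathbf{a}\in\mathcal{C}^{\perp}$.} By the definition of the dual code, $\mathbf{a}\cdot\mathbf{x}=0$ for every $\mathbf{x}\in\mathcal{C}$. Each term of the sum therefore equals $1$, and the sum is $|\mathcal{C}|$.

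\emph{Case $\mathbf{a}\notin\mathcal{C}^{\perp}$.} Consider the map $\varphi\colon\mathcal{C}\to\mathbb{F}_2$ given by $\varphi(\mathbf{x})=\mathbf{a}\cdot\mathbf{x}$. Since $\mathcal{C}$ is a subspace and the inner product is bilinear, $\varphi$ is $\mathbb{F}_2$-linear. Because $\mathbf{a}\notin\mathcal{C}^{\perp}$, there is some $\mathbf{x}_0\in\mathcal{C}$ with $\varphi(\mathbf{x}_0)=1$, so $\varphi$ is surjective. Its kernel $K$ then has index $2$ in $\mathcal{C}$, and the fibers $K$ and $\mathbf{x}_0+K$ each have exactly $|\mathcal{C}|/2$ elements. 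Terms from $K$ contribute $+1$ and terms from $\mathbf{x}_0+K$ contribute $-1$, so the sum is $0$.

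Alternatively, the same case can be handled by a shift argument. Denote the sum by $S$. The translation $\mathbf{x}\mapsto\mathbf{x}+\mathbf{x}_0$ is a bijection of $\mathcal{C}$, so
\[
S=\sum_{\mathbf{x}\in\mathcal{C}}(-1)^{\mathbf{a}\cdot(\mathbf{x}+\mathbf{x}_0)}=(-1)^{\mathbf{a}\cdot\mathbf{x}_0}S=-S,
\]
which forces $S=0$.

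Combining the two cases gives the displayed formula, and hence the equivalent identity $\widehat{\mathbf{1}_{\mathcal{C}}}=|\mathcal{C}|\mathbf{1}_{\mathcal{C}^{\perp}}$. There is no real obstacle. The only point needing care is the second case: one must use linearity of $\mathcal{C}$ (closure under addition) to justify that translation by $\mathbf{x}_0$ permutes $\mathcal{C}$, or equivalently that the two fibers of $\varphi$ have equal size.
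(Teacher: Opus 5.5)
Your proof is correct, and your alternative shift argument (translation by $\mathbf{x}_0$ permutes $\mathcal{C}$, giving $S=-S$) is exactly the paper's proof of the case $\mathbf{a}\notin\mathcal{C}^{\perp}$; the case $\mathbf{a}\in\mathcal{C}^{\perp}$ is handled the same way as in the paper. Your primary kernel-and-coset count is an equivalent form of the same idea: linearity of $\mathcal{C}$ splits it into two equal-sized classes on which $(-1)^{\mathbf{a}\cdot\mathbf{x}}$ takes opposite signs.
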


\begin{proof}
Suppose first that $\mathbf{a}\in\mathcal{C}^{\perp}$. Then
$
\mathbf{a}\cdot\mathbf{x}=0
$
for every $\mathbf{x}\in\mathcal{C}$, and hence
\[
\widehat{\mathbf{1}_{\mathcal{C}}}(\mathbf{a})
=
\sum_{\mathbf{x}\in\mathcal{C}}1
=
|\mathcal{C}|.
\]

Now suppose that $\mathbf{a}\notin\mathcal{C}^{\perp}$. Then there exists
$\mathbf{x}_0\in\mathcal{C}$ such that
$
\mathbf{a}\cdot\mathbf{x}_0=1.
$
Set
\[
S
=
\sum_{\mathbf{x}\in\mathcal{C}}
(-1)^{\mathbf{a}\cdot\mathbf{x}}.
\]
Since $\mathcal{C}$ is linear, the map
\[
\mathbf{x}\longmapsto\mathbf{x}+\mathbf{x}_0
\]
is a bijection of $\mathcal{C}$. Therefore,
\[
S
=
\sum_{\mathbf{x}\in\mathcal{C}}
(-1)^{\mathbf{a}\cdot(\mathbf{x}+\mathbf{x}_0)}\\
=
(-1)^{\mathbf{a}\cdot\mathbf{x}_0}
\sum_{\mathbf{x}\in\mathcal{C}}
(-1)^{\mathbf{a}\cdot\mathbf{x}}\\
=
-S.
\]
It follows that $S=0$, which proves the result.
\end{proof}

\subsection{The MacWilliams Identity via the Walsh--Hadamard Transform}

The Walsh--Hadamard transform provides a natural Fourier-analytic
formulation of the MacWilliams identity. For a binary linear code
$\mathcal{C}$ in $\mathbb{F}_2^n$, this identity relates the weight
enumerator of $\mathcal{C}$ to that of its dual code
$\mathcal{C}^{\perp}$. We derive it from the orthogonality of the
characters of $\mathbb{F}_2^n$.

We first recall the discrete Poisson summation formula.

\begin{theorem}[Poisson summation formula]
\label{thm:poisson-f2}
Let $\mathcal{C}$ be a binary linear code in $\mathbb{F}_2^n$  and let
$f\colon\mathbb{F}_2^n\to\mathbb{C}$. Then
\[
\sum_{\mathbf{v}\in\mathcal{C}^{\perp}}f(\mathbf{v})
=
\frac{1}{|\mathcal{C}|}
\sum_{\mathbf{u}\in\mathcal{C}}
\widehat{f}(\mathbf{u}).
\]
\end{theorem}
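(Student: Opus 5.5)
The plan is to expand the right-hand side using the definition of the Walsh--Hadamard transform, interchange the order of summation, and then apply Proposition~\ref{prop:WH-indicator-code} to the inner sum over $\mathcal{C}$. Concretely, I will write
\[
\sum_{\mathbf{u}\in\mathcal{C}}\widehat{f}(\mathbf{u})
=
\sum_{\mathbf{u}\in\mathcal{C}}\sum_{\mathbf{x}\in\mathbb{F}_2^n}f(\mathbf{x})(-1)^{\mathbf{u}\cdot\mathbf{x}}
=
\sum_{\mathbf{x}\in\mathbb{F}_2^n}f(\mathbf{x})\sum_{\mathbf{u}\in\mathcal{C}}(-1)^{\mathbf{x}\cdot\mathbf{u}}.
\]
Both sums are finite, so the interchange needs no justification beyond finiteness. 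The inner sum is $\widehat{\mathbf{1}_{\mathcal{C}}}(\mathbf{x})$, using the symmetry $\mathbf{u}\cdot\mathbf{x}=\mathbf{x}\cdot\mathbf{u}$ of the standard inner product.

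Next I will invoke Proposition~\ref{prop:WH-indicator-code}. It gives $\widehat{\mathbf{1}_{\mathcal{C}}}(\mathbf{x})=|\mathcal{C}|\,\mathbf{1}_{\mathcal{C}^{\perp}}(\mathbf{x})$. Substituting this, the double sum collapses to
\[
|\mathcal{C}|\sum_{\mathbf{x}\in\mathcal{C}^{\perp}}f(\mathbf{x}).
\]
Dividing by $|\mathcal{C}|$, which is nonzero since $\mathbf{0}\in\mathcal{C}$, then yields the stated identity.

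There is no genuine obstacle here. The only point needing care is to confirm that the inner character sum over $\mathcal{C}$ is evaluated at the correct argument, namely $\mathbf{x}$ playing the role of $\mathbf{a}$ in Proposition~\ref{prop:WH-indicator-code}. This is where the symmetry of the dot product is used.
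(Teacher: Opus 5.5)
Your argument is correct, but it runs in the opposite direction from the paper's proof.

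The paper starts from the left-hand side. It writes each $f(\mathbf{v})$ via the Walsh--Hadamard inversion formula, sums over $\mathbf{v}\in\mathcal{C}^{\perp}$, interchanges the sums, and applies Proposition~\ref{prop:WH-indicator-code} to the code $\mathcal{C}^{\perp}$ rather than to $\mathcal{C}$. That choice forces it to invoke $(\mathcal{C}^{\perp})^{\perp}=\mathcal{C}$. It also needs the cardinality relation $|\mathcal{C}|\,|\mathcal{C}^{\perp}|=2^n$ to turn the prefactor $|\mathcal{C}^{\perp}|/2^n$ into $1/|\mathcal{C}|$.

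You start from the right-hand side, expand $\widehat{f}$ by its definition, and apply the proposition directly to $\mathcal{C}$, so the factor $|\mathcal{C}|$ cancels at once. Your route therefore needs strictly less: no inversion formula, no double duality, and no dimension count. Over $\mathbb{F}_2$, double duality itself rests on $\dim\mathcal{C}+\dim\mathcal{C}^{\perp}=n$, which the paper uses without proof.

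The paper's version has a modest expository advantage: it presents the identity as \emph{inversion followed by orthogonality} and makes the relation $|\mathcal{C}|\,|\mathcal{C}^{\perp}|=2^n$ visible. Logically, though, yours is the more economical proof.
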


\begin{proof}
By the Walsh--Hadamard inversion formula,
\[
f(\mathbf{v})
=
\frac{1}{2^n}
\sum_{\mathbf{a}\in\mathbb{F}_2^n}
\widehat{f}(\mathbf{a})
(-1)^{\mathbf{a}\cdot\mathbf{v}}.
\]
Summing over $\mathbf{v}\in\mathcal{C}^{\perp}$ gives
\[
\begin{aligned}
\sum_{\mathbf{v}\in\mathcal{C}^{\perp}}f(\mathbf{v})
&=
\frac{1}{2^n}
\sum_{\mathbf{a}\in\mathbb{F}_2^n}
\widehat{f}(\mathbf{a})
\sum_{\mathbf{v}\in\mathcal{C}^{\perp}}
(-1)^{\mathbf{a}\cdot\mathbf{v}}.
\end{aligned}
\]
Applying Proposition~\ref{prop:WH-indicator-code} to
$\mathcal{C}^{\perp}$ and using
$
\left(\mathcal{C}^{\perp}\right)^{\perp}
=
\mathcal{C},
$
we obtain
\[
\sum_{\mathbf{v}\in\mathcal{C}^{\perp}}
(-1)^{\mathbf{a}\cdot\mathbf{v}}
=
\begin{cases}
|\mathcal{C}^{\perp}|,
    & \mathbf{a}\in\mathcal{C},\\
0,
    & \mathbf{a}\notin\mathcal{C}.
\end{cases}
\]
Consequently,
\[
\sum_{\mathbf{v}\in\mathcal{C}^{\perp}}f(\mathbf{v})
=
\frac{|\mathcal{C}^{\perp}|}{2^n}
\sum_{\mathbf{u}\in\mathcal{C}}
\widehat{f}(\mathbf{u}).
\]
Since
$
|\mathcal{C}|\,|\mathcal{C}^{\perp}|=2^n,
$
we have
\[
\frac{|\mathcal{C}^{\perp}|}{2^n}
=
\frac{1}{|\mathcal{C}|},
\]
which proves the result.
\end{proof}

We now apply the Poisson summation formula to derive the MacWilliams
identity. Recall that the homogeneous weight enumerator of
$\mathcal{C}$ is
\[
W_{\mathcal{C}}(x,y)
=
\sum_{\mathbf{v}\in\mathcal{C}}
x^{n-\operatorname{wt}(\mathbf{v})}y^{\operatorname{wt}(\mathbf{v})}.
\]

\begin{theorem}[MacWilliams identity]
\label{thm:macwilliams-binary}
Let $\mathcal{C}$ be a binary linear code in $\mathbb{F}_2^n$. Then
\[
W_{\mathcal{C}^{\perp}}(x,y)
=
\frac{1}{|\mathcal{C}|}
W_{\mathcal{C}}(x+y,x-y).
\]
\end{theorem}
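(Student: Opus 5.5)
We apply the Poisson summation formula (Theorem~\ref{thm:poisson-f2}) to the single function
\[
f(\mathbf{v})=x^{n-\operatorname{wt}(\mathbf{v})}y^{\operatorname{wt}(\mathbf{v})},
\qquad \mathbf{v}\in\mathbb{F}_2^n,
\]
with $x,y$ treated as complex numbers (or as formal indeterminates, since both sides are polynomials). With this choice the left-hand side $\sum_{\mathbf{v}\in\mathcal{C}^{\perp}}f(\mathbf{v})$ is exactly $W_{\mathcal{C}^{\perp}}(x,y)$. It therefore suffices to compute the Walsh--Hadamard transform $\widehat{f}(\mathbf{u})$ in closed form and to recognize $\sum_{\mathbf{u}\in\mathcal{C}}\widehat{f}(\mathbf{u})$ as $W_{\mathcal{C}}(x+y,x-y)$.

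The key step is that $f$ factors over coordinates:
\[
f(\mathbf{v})=\prod_{i=1}^{n}x^{1-v_i}y^{v_i},
\qquad
(-1)^{\mathbf{u}\cdot\mathbf{v}}=\prod_{i=1}^{n}(-1)^{u_iv_i}.
\]
Since $\mathbb{F}_2^n=\mathbb{F}_2\times\cdots\times\mathbb{F}_2$, the sum over $\mathbf{v}$ splits into a product of one-coordinate sums:
\[
\widehat{f}(\mathbf{u})=\prod_{i=1}^{n}\sum_{v_i\in\{0,1\}}x^{1-v_i}y^{v_i}(-1)^{u_iv_i}
=\prod_{i=1}^{n}\bigl(x+(-1)^{u_i}y\bigr).
\]
Each factor equals $x+y$ when $u_i=0$ and $x-y$ when $u_i=1$. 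Hence
\[
\widehat{f}(\mathbf{u})=(x+y)^{n-\operatorname{wt}(\mathbf{u})}(x-y)^{\operatorname{wt}(\mathbf{u})}.
\]

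Substituting into Theorem~\ref{thm:poisson-f2} gives
\[
W_{\mathcal{C}^{\perp}}(x,y)=\frac{1}{|\mathcal{C}|}\sum_{\mathbf{u}\in\mathcal{C}}(x+y)^{n-\operatorname{wt}(\mathbf{u})}(x-y)^{\operatorname{wt}(\mathbf{u})}=\frac{1}{|\mathcal{C}|}W_{\mathcal{C}}(x+y,x-y),
\]
where the last equality uses the codeword-sum form of the homogeneous weight enumerator. If the identity is wanted as a polynomial identity in indeterminates, we note that both sides are polynomials agreeing at all complex $(x,y)$, so they agree as polynomials.

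The only step requiring care is the factorization of $\widehat{f}$, namely interchanging the sum over $\mathbf{v}$ with the product over coordinates; everything else is direct substitution. No genuine obstacle is expected.
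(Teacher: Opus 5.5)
Your proposal is correct and follows the paper's proof essentially step for step: you apply the Poisson summation formula to $f(\mathbf{v})=x^{n-\operatorname{wt}(\mathbf{v})}y^{\operatorname{wt}(\mathbf{v})}$, factor $\widehat{f}(\mathbf{u})$ coordinatewise into $(x+y)^{n-\operatorname{wt}(\mathbf{u})}(x-y)^{\operatorname{wt}(\mathbf{u})}$, and substitute. The only cosmetic difference is that the paper applies the Poisson formula coefficientwise to a polynomial-valued $f$, whereas you evaluate at complex $(x,y)$ and then pass to the polynomial identity; both are equally valid.
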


\begin{proof}
Consider the polynomial-valued function
\[
f\colon\mathbb{F}_2^n\longrightarrow\mathbb{C}[x,y]
\]
defined by
\[
f(\mathbf{v})
=
x^{n-\operatorname{wt}(\mathbf{v})}y^{\operatorname{wt}(\mathbf{v})}.
\]
The Poisson summation formula applies coefficientwise to such
functions.

For $\mathbf{u}\in\mathbb{F}_2^n$, the Walsh--Hadamard transform of $f$
can be computed coordinatewise:
\[
\begin{aligned}
\widehat{f}(\mathbf{u})
&=
\sum_{\mathbf{v}\in\mathbb{F}_2^n}
x^{n-\operatorname{wt}(\mathbf{v})}
y^{\operatorname{wt}(\mathbf{v})}
(-1)^{\mathbf{u}\cdot\mathbf{v}}\\
&=
\prod_{i=1}^{n}
\left(
\sum_{v_i\in\{0,1\}}
x^{1-v_i}y^{v_i}(-1)^{u_i v_i}
\right).
\end{aligned}
\]
For each coordinate,
\[
\sum_{v_i\in\{0,1\}}
x^{1-v_i}y^{v_i}(-1)^{u_i v_i}
=
\begin{cases}
x+y, & u_i=0,\\
x-y, & u_i=1.
\end{cases}
\]
Therefore,
\[
\widehat{f}(\mathbf{u})
=
(x+y)^{n-\operatorname{wt}(\mathbf{u})}
(x-y)^{\operatorname{wt}(\mathbf{u})}.
\]

Applying Theorem~\ref{thm:poisson-f2}, we obtain
\[
\begin{aligned}
W_{\mathcal{C}^{\perp}}(x,y)
&=
\sum_{\mathbf{v}\in\mathcal{C}^{\perp}}
x^{n-\operatorname{wt}(\mathbf{v})}
y^{\operatorname{wt}(\mathbf{v})}\\
&=
\frac{1}{|\mathcal{C}|}
\sum_{\mathbf{u}\in\mathcal{C}}
\widehat{f}(\mathbf{u})\\
&=
\frac{1}{|\mathcal{C}|}
\sum_{\mathbf{u}\in\mathcal{C}}
(x+y)^{n-\operatorname{wt}(\mathbf{u})}
(x-y)^{\operatorname{wt}(\mathbf{u})}\\
&=
\frac{1}{|\mathcal{C}|}
W_{\mathcal{C}}(x+y,x-y).
\end{aligned}
\]
\end{proof}

Equivalently, if
\[
W_{\mathcal{C}}(x,y)
=
\sum_{w=0}^{n}
A_w x^{n-w}y^w,
\]
then
\[
W_{\mathcal{C}^{\perp}}(x,y)
=
\frac{1}{|\mathcal{C}|}
\sum_{w=0}^{n}
A_w(x+y)^{n-w}(x-y)^w.
\]

\section{Weight Properties of SPC Product Codes}
\label{sec:weight-properties-spc}

The notion of generalized Hamming weights was introduced by
Wei~\cite{Wei1991} as a refinement of the minimum Hamming distance of a
linear code. While the minimum distance is the smallest support size of
a nonzero codeword, the $r$-th generalized Hamming weight is the
smallest support size of an $r$-dimensional subcode. The resulting
weight hierarchy therefore provides a finer description of the support
structure of a linear code.

Wei introduced generalized Hamming weights in connection with the
wiretap channel of type II, where they characterize the amount of
information that may be revealed through a partially observed
transmission. This interpretation motivates the study of the entire
weight hierarchy rather than only the minimum distance. In particular,
the first generalized Hamming weight coincides with the minimum
distance, whereas the higher generalized Hamming weights describe the
minimum number of coordinates required to support subcodes of
increasing dimensions.

A natural question is how generalized Hamming weights behave under
standard code constructions, such as products of linear codes. Wei and
Yang~\cite{WeiYang1993} conjectured a formula for the generalized
Hamming weights of a product code in terms of those of its component
codes, provided that both components satisfy the chain condition. The
conjecture was proved by Schaathun~\cite{Schaathun2000}, and subsequent
work extended the characterization to products with more than two
components (see, for example,~\cite{MartinezPerezWillems2004}).

SPC codes satisfy the chain condition, since their
generalized Hamming weights can be attained by a nested sequence of
subcodes. In what follows, we give a self-contained derivation for
products binary SPC codes. Although our main object
of study is the   product
$
\mathcal{C}_n
=
\operatorname{SPC}(n)\otimes\operatorname{SPC}(n),
$
it is convenient to first consider the rectangular product
$\operatorname{SPC}(m)\otimes\operatorname{SPC}(n)$. The parameters
$m$ and $n$ denote the lengths of the component codes.

\subsection{The code $\operatorname{SPC}(2)\otimes
\operatorname{SPC}(n)$}

\begin{proposition}
Let $n\geq 2$ and let
$
\mathcal C_{2,n}
=
\operatorname{SPC}(2)\otimes\operatorname{SPC}(n)
$
be the binary linear code generated by
\[
G
=
\begin{bmatrix}1&1\end{bmatrix}
\otimes
\begin{bmatrix}I_{n-1}&\mathbf{1}_{n-1}\end{bmatrix}.
\]
Then $\mathcal{C}_{2,n}$ has parameters $[2n,n-1,4]_2$, and its
generalized Hamming weights are
\[
d_r(\mathcal{C}_{2,n})=2(r+1),
\qquad
1\leq r\leq n-1.
\]
Equivalently, its generalized Hamming weight hierarchy is
\[
\bigl(d_1(\mathcal{C}_{2,n}),\ldots,d_{n-1}(\mathcal{C}_{2,n})\bigr)
=
(4,6,8,\ldots,2n).
\]
\end{proposition}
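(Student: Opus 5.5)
The plan is to identify $\mathcal{C}_{2,n}$ explicitly as a "doubled" copy of $\operatorname{SPC}(n)$ and reduce everything to the weight hierarchy of $\operatorname{SPC}(n)$. Since $\begin{bmatrix}1&1\end{bmatrix}$ generates $\operatorname{SPC}(2)=\{00,11\}$, every codeword matrix has the form $X=G_1^TUG_2$ with $U\in\mathbb{F}_2^{1\times(n-1)}$. Thus $X$ is the $2\times n$ matrix whose two rows are both equal to $\mathbf{c}=UG_2\in\operatorname{SPC}(n)$. The map $\varphi\colon\operatorname{SPC}(n)\to\mathcal{C}_{2,n}$, $\mathbf{c}\mapsto(\mathbf{c},\mathbf{c})$, is then a linear bijection. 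This immediately gives the length $2n$ and dimension $n-1$. It also gives $\operatorname{wt}(\varphi(\mathbf{c}))=2\operatorname{wt}(\mathbf{c})$, so $d=2\cdot 2=4$. More importantly, for any subcode $\mathcal{D}\leq\operatorname{SPC}(n)$, the support of $\varphi(\mathcal{D})$ is $\operatorname{supp}(\mathcal{D})$ placed in both rows. Hence $|\operatorname{supp}(\varphi(\mathcal{D}))|=2|\operatorname{supp}(\mathcal{D})|$, and since $\varphi$ preserves dimensions, $d_r(\mathcal{C}_{2,n})=2\,d_r(\operatorname{SPC}(n))$.

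It then remains to show $d_r(\operatorname{SPC}(n))=r+1$ for $1\leq r\leq n-1$.

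\emph{Upper bound.} Take the subcode $\mathcal{D}_r$ of all even-weight vectors supported in the first $r+1$ coordinates. This is $\operatorname{SPC}(r+1)$ padded by zeros, so it has dimension $r$ and support of size $r+1$.

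\emph{Lower bound.} Let $\mathcal{D}$ be an $r$-dimensional subcode with support $S$, $|S|=s$. Restricting to the coordinates in $S$ is injective on $\mathcal{D}$, and its image lies in the even-weight vectors of $\mathbb{F}_2^{S}$, a space of dimension $s-1$. Therefore $r\leq s-1$, i.e. $s\geq r+1$. (One could alternatively invoke the general bound $d_r\geq d_{r-1}+1$ together with $d_1=2$ for the lower bound, but the restriction argument is cleaner and self-contained.)

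Combining the two steps yields $d_r(\mathcal{C}_{2,n})=2(r+1)$, which gives the hierarchy $(4,6,\ldots,2n)$. The last entry is consistent with $\operatorname{supp}(\mathcal{C}_{2,n})$ being all $2n$ coordinates.

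The only point needing care is the lower bound. In particular, one must note that a subcode of $\operatorname{SPC}(n)$ with support $S$ lies inside the even-weight code on $S$, so its dimension is at most $|S|-1$, not $|S|$. This is the step where the parity constraint is actually used. The rest is bookkeeping through the doubling isomorphism $\varphi$.
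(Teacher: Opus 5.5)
Your proposal is correct and follows essentially the same route as the paper: the doubling isomorphism $\varphi(\mathbf{c})=(\mathbf{c},\mathbf{c})$ doubles every support and so reduces the problem to showing $d_r(\operatorname{SPC}(n))=r+1$. The paper proves that equality exactly as you do, observing that an $r$-dimensional subcode supported on $s$ coordinates lies in the even-weight code on those coordinates (so $r\le s-1$), and exhibiting the even-weight vectors on $r+1$ fixed coordinates for the matching upper bound.
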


\begin{proof}
Let
$
G_n
=
\begin{bmatrix}I_{n-1}&\mathbf{1}_{n-1}\end{bmatrix}
$
be a generator matrix of $\operatorname{SPC}(n)$. Since
\[
G
=
\begin{bmatrix}G_n&G_n\end{bmatrix},
\]
the code $\mathcal{C}_{2,n}$ is the image of $\operatorname{SPC}(n)$ under
the injective linear map
\[
\varphi\colon\operatorname{SPC}(n)\longrightarrow\mathcal{C},
\qquad
\varphi(\mathbf{x})=(\mathbf{x},\mathbf{x}).
\]
Consequently, $\mathcal{C}_{2,n}$ has dimension $n-1$, and every support is
duplicated under $\varphi$.

We first observe that
\[
d_r\bigl(\operatorname{SPC}(n)\bigr)=r+1,
\qquad
1\leq r\leq n-1.
\]
Indeed, if an $r$-dimensional subcode is supported on a set of $s$
coordinates, then it is contained in the single parity-check code on
those $s$ coordinates, which has dimension $s-1$. Hence
$
r\leq s-1,
$
and therefore $s\geq r+1$. Conversely, the set of all even-weight
vectors supported on any fixed set of $r+1$ coordinates is an
$r$-dimensional subcode of $\operatorname{SPC}(n)$. This proves the
claim.

Every $r$-dimensional subcode of $\mathcal{C}_{2,n}$ is the image under
$\varphi$ of an $r$-dimensional subcode of $\operatorname{SPC}(n)$.
Since $\varphi$ duplicates the support, i.e., $|\operatorname{supp}(\varphi(\mathcal D))|=2|\operatorname{supp}(\mathcal D)|$, it follows that
\[
d_r(\mathcal{C})
=
2d_r\bigl(\operatorname{SPC}(n)\bigr)
=
2(r+1).
\]
\end{proof}
 
\subsection{The code
$\operatorname{SPC}(m)\otimes\operatorname{SPC}(n)$}

Let $m,n\geq2$ and consider the binary product code
$\mathcal{C}_{m,n}
=\operatorname{SPC}(m)\otimes\operatorname{SPC}(n)$.
This code has parameters $[mn,(m-1)(n-1),4]_2$.

We first recall the chain condition for generalized Hamming weights \cite{Encheva1994}. An
$[n,k]_q$ linear code $\mathcal{C}$ is said to satisfy the
\emph{chain condition} if there exists a sequence of subcodes
$\{0\}=\mathcal{D}_0\subset\mathcal{D}_1\subset\cdots
\subset\mathcal{D}_k=\mathcal{C}$ such that
$\dim(\mathcal{D}_i)=i$ and
$|\operatorname{supp}(\mathcal{D}_i)|=d_i(\mathcal{C})$ for every
$1\leq i\leq k$.

For $\ell\geq2$, the code $\operatorname{SPC}(\ell)$ satisfies the
chain condition. Indeed, for $1\leq i\leq\ell-1$, let
$\mathcal{D}_i^{(\ell)}$ be the subcode consisting of all even-weight
vectors supported on the first $i+1$ coordinates. Then
$$\dim(\mathcal{D}_i^{(\ell)})=i \quad \text{and} \quad |\operatorname{supp}(\mathcal{D}_i^{(\ell)})|=i+1.$$ 
Since
$d_i(\operatorname{SPC}(\ell))=i+1$, the nested sequence
$$\mathcal{D}_1^{(\ell)}\subset\cdots\subset
\mathcal{D}_{\ell-1}^{(\ell)}=\operatorname{SPC}(\ell)$$
realizes the complete weight hierarchy.

The generalized Hamming weights of product codes satisfying the chain
condition can be determined from the weight hierarchies of their
component codes. The following formula was conjectured by Wei and
Yang~\cite{WeiYang1993} and proved by
Schaathun~\cite{Schaathun2000}.

\begin{theorem}[Product formula for chained codes]
\label{thm:product-chain-formula}
Let $\mathcal{A}$ and $\mathcal{B}$ be linear codes of dimensions
$k_{\mathcal{A}}$ and $k_{\mathcal{B}}$, respectively, and suppose
that both codes satisfy the chain condition. Set
$d_0(\mathcal{A})=0$. Then, for
$1\leq r\leq k_{\mathcal{A}}k_{\mathcal{B}}$,
\[
d_r(\mathcal{A}\otimes\mathcal{B})
=
\min_{\substack{
1\leq s\leq k_{\mathcal{A}}\\
k_{\mathcal{B}}\geq t_1\geq\cdots\geq t_s\geq1\\
t_1+\cdots+t_s=r
}}
\sum_{i=1}^{s}
\bigl(d_i(\mathcal{A})-d_{i-1}(\mathcal{A})\bigr)
d_{t_i}(\mathcal{B}).
\]
\end{theorem}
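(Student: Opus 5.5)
We prove the two inequalities separately. For the upper bound we exhibit, for each admissible tuple $(s;t_1\geq\cdots\geq t_s)$, an $r$-dimensional subcode attaining the corresponding sum. For the lower bound we show that every $r$-dimensional subcode has at least that many coordinates in its support.

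\emph{Upper bound.} Fix chains $\mathcal{A}_1\subset\cdots\subset\mathcal{A}_{k_{\mathcal{A}}}$ and $\mathcal{B}_1\subset\cdots\subset\mathcal{B}_{k_{\mathcal{B}}}$ realizing the weight hierarchies. Choose a basis $\mathbf{a}_1,\ldots,\mathbf{a}_{k_{\mathcal{A}}}$ of $\mathcal{A}$ adapted to the chain, so that $\mathcal{A}_i=\langle\mathbf{a}_1,\ldots,\mathbf{a}_i\rangle$. Then put
\[
\mathcal{D}=\sum_{i=1}^{s}\mathbf{a}_i\otimes\mathcal{B}_{t_i}.
\]
This sum is direct, so $\dim\mathcal{D}=\sum_i t_i=r$. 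Because $t_1\geq\cdots\geq t_s$, the space $\mathcal{D}$ is contained in $\sum_i (\mathcal{A}_i\setminus\mathcal{A}_{i-1})$-layered pieces. Concretely, $\operatorname{supp}(\mathcal{D})\subseteq\bigcup_i \bigl(\operatorname{supp}(\mathcal{A}_i)\times\operatorname{supp}(\mathcal{B}_{t_i})\bigr)$. The coordinates in $\operatorname{supp}(\mathcal{A}_i)\setminus\operatorname{supp}(\mathcal{A}_{i-1})$ are covered only by $\mathcal{B}_{t_j}$ with $j\geq i$. These are all contained in $\mathcal{B}_{t_i}$ because the $t_j$ are nonincreasing and the $\mathcal{B}$'s are nested. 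Summing over the layers gives $|\operatorname{supp}(\mathcal{D})|\leq\sum_i(d_i(\mathcal{A})-d_{i-1}(\mathcal{A}))d_{t_i}(\mathcal{B})$.

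\emph{Lower bound.} This is the hard direction, and it is where Schaathun's contribution lies. Let $\mathcal{D}\leq\mathcal{A}\otimes\mathcal{B}$ have dimension $r$. View codewords as matrices whose rows lie in $\mathcal{B}$ and whose columns lie in $\mathcal{A}$. For each coordinate $j$ of $\mathcal{A}$ (a row index), let $\mathcal{D}^{(j)}\leq\mathcal{B}$ be the space spanned by the $j$-th rows of codewords of $\mathcal{D}$. Then $|\operatorname{supp}(\mathcal{D})|=\sum_j|\operatorname{supp}(\mathcal{D}^{(j)})|\geq\sum_j d_{\dim\mathcal{D}^{(j)}}(\mathcal{B})$.

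The plan is to order the rows so that the dimensions $\tau_j=\dim\mathcal{D}^{(j)}$ are nonincreasing. We then show that the multiset $\{\tau_j\}$ dominates a profile of the form ``$d_i(\mathcal{A})-d_{i-1}(\mathcal{A})$ rows of dimension at least $t_i$'' with $\sum t_i=r$. The key tool is a filtration of $\mathcal{D}$ by the subspaces $\mathcal{D}_{\geq\tau}$. Projecting onto $\mathcal{A}$-coordinates, these subspaces yield subcodes of $\mathcal{A}$, and the generalized weights of $\mathcal{A}$ bound how many rows can have large $\tau_j$. This is the counting we expect to be the main obstacle, since it is where the chain condition on $\mathcal{A}$ must be used. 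Because this combinatorial step is lengthy and is precisely the content of the Wei--Yang conjecture, in the paper we would cite Schaathun~\cite{Schaathun2000} for it rather than reproduce it.

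The only thing actually needed for SPC codes is the explicit chain exhibited above, with $d_i(\operatorname{SPC}(\ell))=i+1$.
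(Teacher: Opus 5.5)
Your proposal takes essentially the same approach as the paper: the paper gives no proof of this theorem and simply quotes it from Wei--Yang and Schaathun, which is what you do for the lower bound, so your row-space filtration sketch is not load-bearing and should be presented as a citation rather than as an argument. Your explicit construction $\mathcal{D}=\sum_{i=1}^{s}\mathbf{a}_i\otimes\mathcal{B}_{t_i}$, with the layer-by-layer support count giving $\sum_{i=1}^{s}\bigl(d_i(\mathcal{A})-d_{i-1}(\mathcal{A})\bigr)d_{t_i}(\mathcal{B})$, is correct and proves the easy upper-bound half more explicitly than the paper does.
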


Now, we introduce the weight hierarchy of $\mathcal{C}_{m,n}$.

\begin{theorem}\label{thm:ghw-spc-product}
Let $m,n\geq2$. For every
$1\leq r\leq(m-1)(n-1)$, the $r$-th generalized Hamming weight of
$\mathcal{C}_{m,n}$ is
\[
d_r(\mathcal{C}_{m,n})
=
r+1+\lambda_{m,n}(r),
\]
where
\[
\lambda_{m,n}(r)
=
\min\left\{
a+b:
1\leq a\leq m-1,\ 
1\leq b\leq n-1,\ 
ab\geq r
\right\}.
\]
Equivalently,
\[
d_r(\mathcal{C}_{m,n})
=
r+1+
\min_{\substack{
1\leq a\leq m-1\\
1\leq b\leq n-1\\
ab\geq r
}}
(a+b).
\]
\end{theorem}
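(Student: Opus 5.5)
The plan is to specialize Theorem~\ref{thm:product-chain-formula} to $\mathcal{A}=\operatorname{SPC}(m)$ and $\mathcal{B}=\operatorname{SPC}(n)$. Both codes satisfy the chain condition, as shown above. Their weight hierarchies are $d_i(\operatorname{SPC}(\ell))=i+1$ for $1\le i\le \ell-1$, and $k_{\mathcal{A}}=m-1$, $k_{\mathcal{B}}=n-1$. With $d_0(\mathcal{A})=0$, the increments are
\[
d_1(\mathcal{A})-d_0(\mathcal{A})=2
\qquad\text{and}\qquad
d_i(\mathcal{A})-d_{i-1}(\mathcal{A})=1\ \text{for } i\ge2 .
\]
Hence, for an admissible choice $s$, $n-1\ge t_1\ge\cdots\ge t_s\ge1$, $\sum t_i=r$, the objective equals
\[
2(t_1+1)+\sum_{i=2}^{s}(t_i+1)=r+1+(s+t_1).
\]
So $d_r(\mathcal{C}_{m,n})=r+1+\mu(r)$, where $\mu(r)$ is the minimum of $s+t_1$ over admissible sequences. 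It remains to show that $\mu(r)=\lambda_{m,n}(r)$.

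For $\mu\ge\lambda$: any admissible sequence gives the pair $(a,b)=(s,t_1)$. It satisfies $1\le a\le m-1$ and $1\le b\le n-1$. It also satisfies $ab\ge \sum t_i=r$, because every $t_i\le t_1$. Therefore $s+t_1\ge\lambda_{m,n}(r)$.

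For $\mu\le\lambda$: take a pair $(a,b)$ attaining $\lambda_{m,n}(r)$ and build a sequence with $s=a$, $t_1=b$. The remaining $a-1$ parts must sum to $r-b$, each lying in $[1,b]$. This is possible exactly when $a-1\le r-b\le (a-1)b$. The upper inequality is $ab\ge r$.

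The lower inequality $a+b\le r+1$ is the only real point, and I would prove it by minimality. Suppose $a+b\ge r+2$.
\begin{itemize}
\item If $a\ge2$, then $(a-1)b-(a+b-2)=(a-2)(b-1)\ge0$. This gives $(a-1)b\ge a+b-2\ge r$, so $(a-1,b)$ is feasible with smaller sum, a contradiction.
\item If $a=1$, then $b\ge r+1$, so $(1,b-1)$ is still feasible ($b-1\ge r\ge1$), again a contradiction.
\end{itemize}
With both inequalities in hand, fill the remaining parts greedily, as large as possible up to $b$. This yields a non-increasing sequence with $s=a\le m-1$ and $t_1=b\le n-1$, completing the equality $\mu=\lambda$.

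Finally, I would check that the range of $r$ is covered. Since $r\le(m-1)(n-1)$, the pair $(m-1,n-1)$ is feasible, so $\lambda_{m,n}(r)$ is well defined. The main care point is the lower feasibility bound $a+b\le r+1$; everything else is a direct substitution into Theorem~\ref{thm:product-chain-formula}.
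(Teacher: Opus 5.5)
Your proof is correct and follows the paper's route: specialize Theorem~\ref{thm:product-chain-formula} to $\operatorname{SPC}(m)$ and $\operatorname{SPC}(n)$, reduce the objective to $r+1+(s+t_1)$, and identify $\min(s+t_1)$ with $\lambda_{m,n}(r)$. The only difference is in that last identification: the paper fixes $s$, takes $t_1=\lceil r/s\rceil$ via an even split (which requires $s\le r$), and discards $a>r$ from $\min_a\bigl(a+\lceil r/a\rceil\bigr)$, whereas you prove the two inequalities directly, with your minimality argument for $a+b\le r+1$ playing the role of the paper's remark that values $a>r$ cannot be optimal.
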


\begin{proof}
Set
$\mathcal{A}=\operatorname{SPC}(m)$ and
$\mathcal{B}=\operatorname{SPC}(n)$. Their generalized Hamming weights
are
$d_i(\mathcal{A})=i+1$ for $1\leq i\leq m-1$ and
$d_j(\mathcal{B})=j+1$ for $1\leq j\leq n-1$.

Since $d_0(\mathcal{A})=0$, we have
\[
d_i(\mathcal{A})-d_{i-1}(\mathcal{A})
=
\begin{cases}
2, & i=1,\\
1, & 2\leq i\leq m-1.
\end{cases}
\]
Applying Theorem~\ref{thm:product-chain-formula}, we obtain
\[
\begin{aligned}
d_r(\mathcal{C}_{m,n})
&=
\min
\left\{
2(t_1+1)
+
\sum_{i=2}^{s}(t_i+1)
\right\}&=
\min
\left\{
2t_1+2+
\sum_{i=2}^{s}t_i+s-1
\right\}\\
&=
r+1+\min\left\{s+t_1\right\},\\[5pt]
\end{aligned}
\]
where each minimum is taken over all integers $s,t_1,\ldots,t_s$
satisfying
$1\leq s\leq m-1$,
$n-1\geq t_1\geq\cdots\geq t_s\geq1$, and
$t_1+\cdots+t_s=r$.

For a fixed value of $s$, the condition
$t_1+\cdots+t_s=r$ implies
$t_1\geq\lceil r/s\rceil$. Conversely, this bound is attained by
distributing $r$ as evenly as possible among the $s$ positive
integers. More precisely, writing $r=qs+\rho$, where
$0\leq\rho<s$, we may take
\[
t_1=\cdots=t_{\rho}=q+1
\qquad\text{and}\qquad
t_{\rho+1}=\cdots=t_s=q.
\]
Therefore,
\[
\min\left\{s+t_1\right\}
=
\min_{\substack{
1\leq s\leq\min\{r,m-1\}\\
\lceil r/s\rceil\leq n-1
}}
\left(
s+\left\lceil\frac{r}{s}\right\rceil
\right).
\]

On the other hand, for a fixed $a$, the smallest integer $b$ satisfying
$ab\geq r$ is $b=\lceil r/a\rceil$. Hence,
\[
\lambda_{m,n}(r)
=
\min_{\substack{
1\leq a\leq m-1\\
\lceil r/a\rceil\leq n-1
}}
\left(
a+\left\lceil\frac{r}{a}\right\rceil
\right).
\]
Values $a>r$ cannot minimize this expression, so the two minimum
coincide. Consequently,
\[
d_r(\mathcal{C}_{m,n})
=
r+1+\lambda_{m,n}(r).
\]
\end{proof}

For  
$\mathcal{C}_n
=\operatorname{SPC}(n)\otimes\operatorname{SPC}(n)$,
Theorem~\ref{thm:ghw-spc-product} gives the following result.

\begin{corollary}\label{cor:ghw-square-spc-product}
For $1\leq r\leq(n-1)^2$,  the $r$-th generalized Hamming weight of
$\mathcal{C}_{n}$ is
\[
d_r(\mathcal{C}_n)
=
r+1+
\min_{\substack{
1\leq a,b\leq n-1\\
ab\geq r
}}
(a+b).
\]
Equivalently,
\[
d_r(\mathcal{C}_n)
=
r+1+
\min_{\left\lceil r/(n-1)\right\rceil\leq a\leq n-1}
\left(
a+\left\lceil\frac{r}{a}\right\rceil
\right).
\]
\end{corollary}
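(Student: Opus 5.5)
This is the special case $m=n$ of Theorem~\ref{thm:ghw-spc-product}, so the plan is to specialize and then rewrite the minimum. Setting $m=n$ there gives, for $1\leq r\leq (n-1)^2$,
\[
d_r(\mathcal{C}_n)=r+1+\lambda_{n,n}(r),
\qquad
\lambda_{n,n}(r)=\min\{a+b:1\leq a,b\leq n-1,\ ab\geq r\}.
\]
This is exactly the first displayed formula. The only point to check is that the constraint set is nonempty. It is, because $a=b=n-1$ satisfies $ab=(n-1)^2\geq r$.

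For the second formula, I will follow the reduction already used in the proof of Theorem~\ref{thm:ghw-spc-product}.

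\emph{Step 1: optimize $b$ for fixed $a$.} For fixed $a$, the sum $a+b$ is increasing in $b$. So the best admissible choice is the smallest $b$ with $ab\geq r$, namely $b=\lceil r/a\rceil$. This choice is admissible exactly when $1\leq\lceil r/a\rceil\leq n-1$. The lower bound $\lceil r/a\rceil\geq1$ is automatic since $r\geq1$.

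\emph{Step 2: rewrite the upper bound on $\lceil r/a\rceil$.} Since $n-1$ is an integer, $\lceil r/a\rceil\leq n-1$ is equivalent to $r/a\leq n-1$, that is, $a\geq r/(n-1)$. Because $a$ is an integer, this is in turn equivalent to $a\geq\lceil r/(n-1)\rceil$. Combined with $a\leq n-1$, the admissible range is
\[
\left\lceil \tfrac{r}{n-1}\right\rceil\leq a\leq n-1.
\]
This range is nonempty because $r\leq(n-1)^2$ gives $\lceil r/(n-1)\rceil\leq n-1$. Substituting $b=\lceil r/a\rceil$ over this range yields the second expression.

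The main place to be careful is the ceiling equivalence in Step~2, which relies on $n-1$ being an integer. Beyond that, the argument consists of these bookkeeping steps.
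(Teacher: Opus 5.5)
Your proposal is correct and matches the paper's route: the corollary is obtained by setting $m=n$ in Theorem~\ref{thm:ghw-spc-product} and reusing the reduction from its proof, where for fixed $a$ one takes $b=\lceil r/a\rceil$ subject to $\lceil r/a\rceil\leq n-1$. Your conversion of that constraint into $a\geq\lceil r/(n-1)\rceil$ and your nonemptiness check via $r\leq(n-1)^2$ fill in small steps the paper leaves implicit.
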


\begin{example}\label{ex:ghw-spc44}
For
$\mathcal{C}_{4}
=\operatorname{SPC}(4)\otimes\operatorname{SPC}(4)$,
the values of $\lambda_{4}(r)$, for $1\leq r\leq9$, are
\[
(2,3,4,4,5,5,6,6,6).
\]
Therefore, its generalized Hamming weight hierarchy is
\[
\bigl(
d_1(\mathcal{C}_{4}),\ldots,d_9(\mathcal{C}_{4})
\bigr)
=
(4,6,8,9,11,12,14,15,16).
\]

In particular, $d_4(\mathcal{C}_{4})=9$. This value is attained by
the subcode consisting of the matrices whose last row and last column
are zero and whose upper-left $3\times3$ submatrix belongs to
$\mathcal{C}_{3}$. This subcode has dimension $4$ and support size
$9$.
\end{example}

\begin{example}\label{ex:ghw-spc33}
For
$\mathcal{C}_{3}
=\operatorname{SPC}(3)\otimes\operatorname{SPC}(3)$,
we have
$
\bigl(
\lambda_{3,3}(1),
\lambda_{3,3}(2),
\lambda_{3,3}(3),
\lambda_{3,3}(4)
\bigr)
=
(2,3,4,4).
$
Thus,
\[
\bigl(
d_1(\mathcal{C}_{3}),
d_2(\mathcal{C}_{3}),
d_3(\mathcal{C}_{3}),
d_4(\mathcal{C}_{3})
\bigr)
=
(4,6,8,9).
\]
\end{example}

\color{black}
\subsection{Maximum weight and symmetry of the weight enumerator}

We conclude this section with two properties of the Hamming weight
distribution of the   product code $\mathcal{C}_n$. These
properties will also provide useful consistency checks for the explicit
weight enumerator derived later.

We begin by determining the maximum possible weight of a codeword. Recall that the minimum weight of a nonzero codeword is 4.
\begin{proposition}\label{prop:maximum-weight-spc-product}
Let
$
\mathcal{C}_n
=
\operatorname{SPC}(n)\otimes\operatorname{SPC}(n).
$
The maximum Hamming weight of a codeword in $\mathcal{C}_n$ is
\[
\max_{\mathbf{c}\in\mathcal{C}_n}
\operatorname{wt}(\mathbf{c})
=
\begin{cases}
n^2,   & \text{if $n$ is even},\\
n^2-n, & \text{if $n$ is odd}.
\end{cases}
\]
\end{proposition}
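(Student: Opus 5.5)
We argue directly with the matrix description of $\mathcal{C}_n$: a codeword is an $n\times n$ binary matrix $X$ in which every row and every column has even weight. Its Hamming weight is $n^2$ minus the number of zero entries, so we bound the number of zeros from below and then exhibit codewords attaining the bound.

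\emph{Even case.} If $n$ is even, the all-ones matrix $J_n$ has every row and column of weight $n$, which is even. Hence $J_n\in\mathcal{C}_n$ and has weight $n^2$. This is trivially the largest possible weight.

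\emph{Odd case, upper bound.} If $n$ is odd, no row of $X$ can consist entirely of ones, since that row would have odd weight $n$. So every row of $X$ has weight at most $n-1$. Summing over the $n$ rows gives $\operatorname{wt}(X)\le n(n-1)=n^2-n$. Only the row constraints are needed for this bound.

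\emph{Odd case, attaining the bound.} We take $X=J_n+I_n$, the matrix with zeros on the diagonal and ones elsewhere. Each row and each column contains exactly $n-1$ ones, and $n-1$ is even, so $X\in\mathcal{C}_n$. Its weight is $n^2-n$, which matches the upper bound and completes the proof.

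There is no real obstacle here. The only step that needs an idea is choosing a construction in the odd case that meets both the row and column constraints at once; a permutation-matrix complement such as $J_n+I_n$ does this. Alternatively, $J_n+I_n$ is the sum of all weight-$4$ codewords $E_{ij}+E_{i,j+1}+\dots$, but the direct check above is simpler.
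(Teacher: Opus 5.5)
Your proof is correct and matches the paper's argument exactly: $J_n$ for even $n$, the row-by-row bound $\operatorname{wt}(X)\le n(n-1)$ for odd $n$, and $J_n+I_n$ to attain it. Drop your closing aside, though. For odd $n$ the sum of all weight-$4$ codewords is actually zero, since each entry lies in $(n-1)^2$ such rectangles, an even number; nothing in your argument depends on that remark.
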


\begin{proof}
Every codeword of $\mathcal{C}_n$ can be represented as an $n\times n$
binary matrix whose rows and columns have even Hamming weight.

Suppose first that $n$ is even, and let $J_n$ be the $n\times n$
all-ones matrix. Every row and every column of $J_n$ has weight $n$,
which is even. Therefore,
$
J_n\in\mathcal{C}_n.
$
Since
$
\operatorname{wt}(J_n)=n^2
$
and $n^2$ is the length of the code, it follows that
\[
\max_{\mathbf{c}\in\mathcal{C}_n}
\operatorname{wt}(\mathbf{c})
=n^2.
\]

Now suppose that $n$ is odd. Each row of a codeword in
$\mathcal{C}_n$ has even weight and therefore cannot have weight $n$.
Thus, every row has weight at most $n-1$, which gives
\[
\operatorname{wt}(\mathbf{c})
\leq
n(n-1)
=
n^2-n
\]
for every $\mathbf{c}\in\mathcal{C}_n$.

To show that this bound is attained, consider
\[
X=J_n+I_n,
\]
where the addition is performed over $\mathbb{F}_2$. The matrix $X$
has zeros on the main diagonal and ones elsewhere. Consequently, every
row and every column has weight $n-1$, which is even because $n$ is
odd. Hence,
$
X\in\mathcal{C}_n,
$
and
\[
\operatorname{wt}(X)
=
n(n-1)
=
n^2-n.
\]
Therefore, the upper bound is attained.
\end{proof}

The next result establishes that the weight enumerator is symmetric when $n$ is even.

\begin{proposition}\label{prop:symmetry-even-n}
Let $\mathcal{C}_n$ have weight distribution
$(A_0,A_1,\ldots,A_{n^2})$. If $n$ is even, then
\[
A_t=A_{n^2-t}
\]
for every $0\leq t\leq n^2$. Equivalently, its homogeneous weight
enumerator satisfies
\[
W_{\mathcal{C}_n}(x,y)
=
W_{\mathcal{C}_n}(y,x).
\]
\end{proposition}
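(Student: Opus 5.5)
The plan is to exhibit an explicit weight-reversing bijection on $\mathcal{C}_n$, namely translation by the all-ones matrix. When $n$ is even, the proof of Proposition~\ref{prop:maximum-weight-spc-product} shows that $J_n\in\mathcal{C}_n$. Since $\mathcal{C}_n$ is linear, the map
\[
\tau\colon\mathcal{C}_n\longrightarrow\mathcal{C}_n,
\qquad
\tau(X)=X+J_n,
\]
is well defined. It is also a bijection, because it is its own inverse over $\mathbb{F}_2$.

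Next, we compute the effect of $\tau$ on weights. Adding $J_n$ flips every entry of $X$: an entry equal to $1$ becomes $0$, and an entry equal to $0$ becomes $1$. Hence
\[
\operatorname{wt}(X+J_n)=n^2-\operatorname{wt}(X).
\]
It follows that $\tau$ restricts to a bijection between the codewords of weight $t$ and the codewords of weight $n^2-t$, for every $0\leq t\leq n^2$. Therefore $A_t=A_{n^2-t}$.

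The statement about the enumerator then follows by reindexing with $w\mapsto n^2-w$:
\[
W_{\mathcal{C}_n}(y,x)=\sum_{w}A_w\,y^{n^2-w}x^{w}
=\sum_{w}A_{n^2-w}\,x^{n^2-w}y^{w}
=W_{\mathcal{C}_n}(x,y).
\]
Conversely, equality of the two polynomials, compared coefficient by coefficient, gives back $A_t=A_{n^2-t}$. This establishes the stated equivalence.

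There is no real obstacle here. The only point needing care is confirming that $J_n\in\mathcal{C}_n$, which uses the hypothesis that $n$ is even so that every row and every column of $J_n$ has even weight $n$; this was already checked in Proposition~\ref{prop:maximum-weight-spc-product}.
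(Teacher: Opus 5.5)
Your proposal is correct and follows the paper's proof essentially verbatim: both arguments translate by the all-ones codeword, which lies in $\mathcal{C}_n$ because $n$ is even, and use the fact that this involution complements every coordinate, so it gives a bijection between the codewords of weight $t$ and those of weight $n^2-t$. Your explicit reindexing for the homogeneous enumerator (including the converse direction) spells out the step the paper leaves as ``the result follows.''
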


\begin{proof}
Since $n$ is even, the all-one vector $\mathbf{1}_n$ belongs to
$\operatorname{SPC}(n)$. Therefore,
\[
\mathbf{1}_{n^2}
=
\mathbf{1}_n\otimes\mathbf{1}_n
\in
\mathcal{C}_n.
\]

Let $\mathbf{c}\in\mathcal{C}_n$ have weight $t$. Since
$\mathcal{C}_n$ is linear,
$
\mathbf{c}+\mathbf{1}_{n^2}
\in
\mathcal{C}_n.
$
Addition by the all-one vector complements every coordinate, and hence
\[
\operatorname{wt}\bigl(\mathbf{c}+\mathbf{1}_{n^2}\bigr)
=
n^2-t.
\]
Moreover, the map
\[
\mathbf{c}
\longmapsto
\mathbf{c}+\mathbf{1}_{n^2}
\]
is an involution and therefore defines a bijection between the
codewords of weight $t$ and those of weight $n^2-t$. Consequently,
\[
A_t=A_{n^2-t}
\]
for every $0\leq t\leq n^2$. The result follows.
\end{proof}

Combining Propositions~\ref{prop:maximum-weight-spc-product} and
\ref{prop:symmetry-even-n} gives the following characterization.

\begin{corollary}\label{cor:symmetric-weight-enumerator-spc-product}
The homogeneous weight enumerator of $\mathcal{C}_n$ is symmetric if
and only if $n$ is even.
\end{corollary}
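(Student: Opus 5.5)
The statement has two directions, and I will handle them separately using the two preceding propositions.

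For the forward direction, if $n$ is even, Proposition~\ref{prop:symmetry-even-n} already gives $A_t=A_{n^2-t}$ for all $t$. This is exactly the identity $W_{\mathcal{C}_n}(x,y)=W_{\mathcal{C}_n}(y,x)$, so nothing further is needed there.

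For the converse, I will argue by contrapositive: assume $n$ is odd and exhibit one asymmetric pair of coefficients. The natural choice is the pair $(A_0,A_{n^2})$. On one side, $A_0=1$, since the zero codeword is the unique word of weight zero. On the other side, Proposition~\ref{prop:maximum-weight-spc-product} says that for odd $n$ every codeword has weight at most $n^2-n<n^2$ (here $n\geq 3$, so $n>0$). Hence $A_{n^2}=0$.

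It follows that $A_0\neq A_{n^2}$. Comparing the coefficients of the monomial $x^{n^2}$ in $W_{\mathcal{C}_n}(x,y)$ and in $W_{\mathcal{C}_n}(y,x)$ gives $A_0=1$ and $A_{n^2}=0$ respectively. So the two polynomials differ, and the enumerator is not symmetric.

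There is no real obstacle here. The only point needing care is being explicit that symmetry means equality of the polynomials under $x\leftrightarrow y$, which is equivalent to $A_t=A_{n^2-t}$ for every $t$, so that a single failing coefficient suffices.
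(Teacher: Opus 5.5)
Your proof is correct and matches the paper's own argument: the even case is Proposition~\ref{prop:symmetry-even-n}, and for odd $n$ Proposition~\ref{prop:maximum-weight-spc-product} forces $A_{n^2}=0\neq 1=A_0$. Comparing the coefficient of $x^{n^2}$ in $W_{\mathcal{C}_n}(x,y)$ and in $W_{\mathcal{C}_n}(y,x)$ is a precise way to record the failure of symmetry, which the paper leaves implicit.
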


\begin{proof}
If $n$ is even, the result follows from
Proposition~\ref{prop:symmetry-even-n}. If $n$ is odd, then
Proposition~\ref{prop:maximum-weight-spc-product} gives
$
A_{n^2}=0,
$
whereas $A_0=1$. Hence, the weight enumerator cannot be symmetric.
\end{proof}

\section{The Weight Enumerator of the SPC Product Code via the
Walsh--Hadamard Transform}


In this section, we derive an explicit expression for the weight enumerator
of the product of two binary SPC codes. The derivation is
based on the description of the dual code and the MacWilliams identity,
or equivalently, on the Walsh--Hadamard transform.

For integers $m,n\geq 2$, let $\mathcal{C}_{m,n}$ denote the binary code
consisting of all $m\times n$ matrices whose rows and columns have even
Hamming weight. Up to a permutation of coordinates, this code is
$
\mathcal{C}_{m,n}
=
\operatorname{SPC}(m)\otimes\operatorname{SPC}(n)
$ and has parameters
$
[mn,(m-1)(n-1),4].
$

For $\mathbf{a}\in\mathbb{F}_2^m$ and
$\mathbf{b}\in\mathbb{F}_2^n$, define
\[
Y(\mathbf{a},\mathbf{b})
=
\bigl(a_i+b_j\bigr)_
{\substack{1\leq i\leq m\\1\leq j\leq n}}
\in\mathbb{F}_2^{m\times n}.
\]

\begin{lemma}\label{lem:dual-spc-product}
The dual code of $\mathcal{C}_{m,n}$ is
\[
\mathcal{C}_{m,n}^{\perp}
=
\left\{
Y(\mathbf{a},\mathbf{b}) :
\mathbf{a}\in\mathbb{F}_2^m,\ 
\mathbf{b}\in\mathbb{F}_2^n
\right\}.
\]
Moreover, every codeword of $\mathcal{C}_{m,n}^{\perp}$ has exactly two
representations of this form.
\end{lemma}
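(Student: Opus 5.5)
The plan is to identify $\mathcal{C}_{m,n}^{\perp}$ with the row space of the redundant parity-check matrix $\mathcal{H}_{m,n}$. Since the rows of any parity-check matrix span the dual code, and the redundant matrix has the same row space as a full-rank one obtained by deleting a row, we get $\mathcal{C}_{m,n}^{\perp}=\operatorname{rowspace}(\mathcal{H}_{m,n})$. In matrix form, the row-parity check of row $i$ is the $m\times n$ matrix $R_i$ with ones exactly in row $i$. The column-parity check of column $j$ is the matrix $K_j$ with ones exactly in column $j$.

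A general $\mathbb{F}_2$-linear combination $\sum_i a_iR_i+\sum_j b_jK_j$ has $(i,j)$ entry $a_i+b_j$, so it equals $Y(\mathbf{a},\mathbf{b})$. As $(\mathbf{a},\mathbf{b})$ ranges over $\mathbb{F}_2^m\times\mathbb{F}_2^n$ we obtain exactly the row space. This gives the first claim.

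For the second claim, consider the linear map $\Phi\colon\mathbb{F}_2^m\times\mathbb{F}_2^n\to\mathbb{F}_2^{m\times n}$, $(\mathbf{a},\mathbf{b})\mapsto Y(\mathbf{a},\mathbf{b})$. Every nonempty fiber of a linear map is a coset of its kernel, so it suffices to show $|\ker\Phi|=2$. If $a_i+b_j=0$ for all $i,j$, then $a_i=b_j$ for all $i,j$, so all coordinates of $\mathbf{a}$ and $\mathbf{b}$ share a common value $c\in\mathbb{F}_2$. Hence $\ker\Phi=\{(\mathbf{0},\mathbf{0}),(\mathbf{1}_m,\mathbf{1}_n)\}$, and each codeword $Y(\mathbf{a},\mathbf{b})$ is represented exactly by $(\mathbf{a},\mathbf{b})$ and $(\mathbf{a}+\mathbf{1}_m,\mathbf{b}+\mathbf{1}_n)$.

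As a consistency check, the image has size $2^{m+n}/2=2^{m+n-1}$. This matches $2^{mn-(m-1)(n-1)}=2^{m+n-1}$ and the stated rank $m+n-1$ of $\mathcal{H}_{m,n}$.

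There is no real obstacle. The only point needing care is justifying that the redundant $\mathcal{H}_{m,n}$ still spans the whole dual. One can avoid this by a direct argument: each $Y(\mathbf{a},\mathbf{b})$ is orthogonal to every codeword $X$, since $\sum_{i,j}(a_i+b_j)x_{ij}=\sum_i a_i(\text{row sum}_i)+\sum_j b_j(\text{col sum}_j)=0$. This gives inclusion, and equality then follows from the dimension count above.
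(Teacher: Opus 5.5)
Your proof is correct. Your treatment of the second claim (the kernel is $\{(\mathbf{0},\mathbf{0}),(\mathbf{1}_m,\mathbf{1}_n)\}$, and fibers are cosets of it) matches the paper. So does the fallback argument in your last paragraph. The paper shows $\operatorname{Im}(\Phi)\subseteq\mathcal{C}_{m,n}^{\perp}$ by the row-sum/column-sum computation, gets $\dim\operatorname{Im}(\Phi)=m+n-1$ from the kernel, and concludes by comparing with $\dim\mathcal{C}_{m,n}^{\perp}=mn-(m-1)(n-1)$.

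Your primary route for the first claim is genuinely different. You observe that $\mathcal{C}_{m,n}$ is by definition $\ker\mathcal{H}_{m,n}$, so its dual is the row space of $\mathcal{H}_{m,n}$, and that the general combination $\sum_i a_iR_i+\sum_j b_jK_j$ is visibly $Y(\mathbf{a},\mathbf{b})$. This buys you something: you never need $\dim\mathcal{C}_{m,n}=(m-1)(n-1)$ as an input. Combined with your kernel computation, it actually proves $\operatorname{rank}\mathcal{H}_{m,n}=m+n-1$ and hence $\dim\mathcal{C}_{m,n}=(m-1)(n-1)$. The paper imports both facts from the preliminaries, namely the product-code dimension $k_1k_2$ and the asserted rank of the redundant parity-check matrix. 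The paper's route, in exchange, is a direct verification with no parity-check-matrix machinery, but it leans on that dimension.

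Finally, the point you flagged as needing care is not an issue. For any matrix $H$, $\ker H=(\operatorname{rowspace}H)^{\perp}$, so $(\ker H)^{\perp}=\operatorname{rowspace}H$, because $V^{\perp\perp}=V$ for the standard dot product on $\mathbb{F}_2^{mn}$. Redundant rows play no role, so you need neither to delete a row nor to know the rank of $\mathcal{H}_{m,n}$.
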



\begin{proof}
Consider the linear map
\[
\begin{aligned}
\Phi:
\mathbb{F}_2^m\times\mathbb{F}_2^n
&\longrightarrow
\mathbb{F}_2^{m\times n},\\
(\mathbf{a},\mathbf{b})
&\longmapsto
Y(\mathbf{a},\mathbf{b}).
\end{aligned}
\]
We first prove that
$
\operatorname{Im}(\Phi)
\subseteq
\mathcal{C}_{m,n}^{\perp}.
$
Let $X=[x_{ij}]\in\mathcal{C}_{m,n}$. Since every row and every
column of $X$ has even weight, we have
\[
\sum_{j=1}^{n}x_{ij}=0
\quad\text{for every }1\leq i\leq m,
\]
\[
\sum_{i=1}^{m}x_{ij}=0
\quad\text{for every }1\leq j\leq n.
\]
All these equalities are taken in $\mathbb{F}_2$.

For any
$\mathbf{a}=(a_1,\ldots,a_m)\in\mathbb{F}_2^m$ and
$\mathbf{b}=(b_1,\ldots,b_n)\in\mathbb{F}_2^n$, we obtain
\[
\begin{aligned}
\left\langle
Y(\mathbf{a},\mathbf{b}),X
\right\rangle
&=
\sum_{i=1}^{m}\sum_{j=1}^{n}
(a_i+b_j)x_{ij}=
\sum_{i=1}^{m}
a_i\left(\sum_{j=1}^{n}x_{ij}\right)
+
\sum_{j=1}^{n}
b_j\left(\sum_{i=1}^{m}x_{ij}\right)\\
&=0.
\end{aligned}
\]
Therefore, $Y(\mathbf{a},\mathbf{b})$ is orthogonal to every codeword
of $\mathcal{C}_{m,n}$, and hence
$
\operatorname{Im}(\Phi)
\subseteq
\mathcal{C}_{m,n}^{\perp}.
$

We now determine the dimension of $\operatorname{Im}(\Phi)$. If
$(\mathbf{a},\mathbf{b})\in\ker(\Phi)$, then
$
a_i+b_j=0
$
for every $i$ and $j$. Thus, all the coordinates of $\mathbf{a}$ and
$\mathbf{b}$ are equal to the same element of $\mathbb{F}_2$.
Consequently,
\[
\ker(\Phi)
=
\left\{
(\mathbf{0},\mathbf{0}),
(\mathbf{1}_m,\mathbf{1}_n)
\right\}.
\]
In particular, $\dim(\ker(\Phi))=1$, and the rank--nullity theorem
gives
\[
\dim\bigl(\operatorname{Im}(\Phi)\bigr)
=
m+n-1.
\]
Since $\mathcal{C}_{m,n}$ has dimension $(m-1)(n-1)$, its dual has
dimension
\[
\begin{aligned}
\dim\bigl(\mathcal{C}_{m,n}^{\perp}\bigr)
&=
mn-(m-1)(n-1)=
m+n-1.
\end{aligned}
\]
Therefore,
$\operatorname{Im}(\Phi)
=
\mathcal{C}_{m,n}^{\perp}.
$

Finally, every fiber of $\Phi$ is a coset of $\ker(\Phi)$ and hence
contains exactly two elements. More explicitly, the two
representations of a dual codeword are
\[
(\mathbf{a},\mathbf{b})
\quad\text{and}\quad
(\mathbf{a}+\mathbf{1}_m,\mathbf{b}+\mathbf{1}_n).
\]
Thus, every codeword of $\mathcal{C}_{m,n}^{\perp}$ has exactly two
representations of the stated form.
\end{proof}

For $0\leq r\leq m$ and $0\leq t\leq n$, define
\begin{align*}
Z(r,t) &= rt+(m-r)(n-t), \\
O(r,t) &= r(n-t)+(m-r)t.
\end{align*}
Notice that
$
Z(r,t)+O(r,t)=mn.
$

If $\operatorname{wt}(\mathbf{a})=r$ and $\operatorname{wt}(\mathbf{b})=t$, then
$Z(r,t)$ is the number of pairs $(i,j)$ for which $a_i=b_j$, whereas
$O(r,t)$ is the number of pairs for which $a_i\neq b_j$. Consequently,
\[
\operatorname{wt}\bigl(Y(\mathbf{a},\mathbf{b})\bigr)=O(r,t).
\]

\begin{theorem}\label{thm:spc-product-weight-enumerator}
The homogeneous weight enumerator of $\mathcal{C}_{m,n}$ is
\[
W_{\mathcal{C}_{m,n}}(x,y)
=
\frac{1}{2^{m+n}}
\sum_{r=0}^{m}\binom{m}{r}
\sum_{t=0}^{n}\binom{n}{t}
(x+y)^{Z(r,t)}
(x-y)^{O(r,t)}.
\]
Equivalently, its one-variable weight enumerator is
\[
W_{\mathcal{C}_{m,n}}(z)
=
\frac{1}{2^{m+n}}
\sum_{r=0}^{m}\binom{m}{r}
\sum_{t=0}^{n}\binom{n}{t}
(1+z)^{Z(r,t)}
(1-z)^{O(r,t)}.
\]
\end{theorem}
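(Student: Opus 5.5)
The plan is to apply the MacWilliams identity (Theorem~\ref{thm:macwilliams-binary}) with the roles of the code and its dual exchanged. Taking $\mathcal{C}=\mathcal{C}_{m,n}^{\perp}$, so that $(\mathcal{C}_{m,n}^{\perp})^{\perp}=\mathcal{C}_{m,n}$, it gives
\[
W_{\mathcal{C}_{m,n}}(x,y)
=
\frac{1}{|\mathcal{C}_{m,n}^{\perp}|}
\sum_{\mathbf{u}\in\mathcal{C}_{m,n}^{\perp}}
(x+y)^{mn-\operatorname{wt}(\mathbf{u})}(x-y)^{\operatorname{wt}(\mathbf{u})}.
\]
Everything therefore reduces to computing the weight enumerator of the dual code, and Lemma~\ref{lem:dual-spc-product} describes this dual completely.

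First, by Lemma~\ref{lem:dual-spc-product}, $|\mathcal{C}_{m,n}^{\perp}|=2^{m+n-1}$. Moreover, the map $\Phi\colon(\mathbf{a},\mathbf{b})\mapsto Y(\mathbf{a},\mathbf{b})$ is exactly two-to-one onto $\mathcal{C}_{m,n}^{\perp}$. Hence for any function $g$ on $\mathcal{C}_{m,n}^{\perp}$,
\[
\sum_{\mathbf{u}\in\mathcal{C}_{m,n}^{\perp}}g(\mathbf{u})
=\frac12\sum_{\mathbf{a}\in\mathbb{F}_2^m}\sum_{\mathbf{b}\in\mathbb{F}_2^n}g\bigl(Y(\mathbf{a},\mathbf{b})\bigr).
\]
Combining the factor $\tfrac12$ with $1/2^{m+n-1}$ yields the prefactor $1/2^{m+n}$ in the statement.

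Next, I will use the observation made just before the theorem. If $\operatorname{wt}(\mathbf{a})=r$ and $\operatorname{wt}(\mathbf{b})=t$, then $\operatorname{wt}(Y(\mathbf{a},\mathbf{b}))=O(r,t)$ and $mn-\operatorname{wt}(Y(\mathbf{a},\mathbf{b}))=Z(r,t)$. To justify this briefly, an entry $a_i+b_j$ equals $1$ exactly when $a_i\ne b_j$. Counting such pairs gives $r(n-t)+(m-r)t$. The summand thus depends only on $(r,t)$. Grouping the pairs $(\mathbf{a},\mathbf{b})$ by their weights, there are $\binom{m}{r}\binom{n}{t}$ pairs with weights $(r,t)$, which gives the stated double sum. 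Setting $x=1$, $y=z$ then produces the one-variable form.

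The only delicate point is the bookkeeping of normalizations: using $|\mathcal{C}^{\perp}|$ rather than $|\mathcal{C}|$ in the identity, and remembering the double counting from $\ker\Phi$. As a check, evaluating at $y=0$ should give $\frac{1}{2^{m+n}}\cdot 2^{m+n}x^{mn}=x^{mn}$, so that $A_0=1$. Alternatively, one could bypass the MacWilliams identity and apply Poisson summation (Theorem~\ref{thm:poisson-f2}) directly, with the indicator transform from Proposition~\ref{prop:WH-indicator-code}. That route gives the same computation.
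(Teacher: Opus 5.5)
Your proposal is correct and matches the paper's proof. The paper likewise uses the two-to-one parametrization $(\mathbf{a},\mathbf{b})\mapsto Y(\mathbf{a},\mathbf{b})$ from Lemma~\ref{lem:dual-spc-product} to write $W_{\mathcal{C}_{m,n}^{\perp}}(x,y)=\frac12\sum_{r,t}\binom{m}{r}\binom{n}{t}x^{Z(r,t)}y^{O(r,t)}$, and then applies the MacWilliams identity with $|\mathcal{C}_{m,n}^{\perp}|=2^{m+n-1}$; the only cosmetic difference is that you substitute $(x+y,x-y)$ before grouping by weights rather than after.
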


\begin{proof}
By Lemma~\ref{lem:dual-spc-product}, every word of
$\mathcal{C}_{m,n}^{\perp}$ is represented exactly twice by a pair
$(\mathbf{a},\mathbf{b})$. Therefore,
\[
W_{\mathcal{C}_{m,n}^{\perp}}(x,y)
=
\frac{1}{2}
\sum_{\mathbf{a}\in\mathbb{F}_2^m}
\sum_{\mathbf{b}\in\mathbb{F}_2^n}
x^{mn-\operatorname{wt}(Y(\mathbf{a},\mathbf{b}))}
y^{\operatorname{wt}(Y(\mathbf{a},\mathbf{b}))}.
\]
Grouping the vectors according to
$
r=\operatorname{wt}(\mathbf{a}), 
t=\operatorname{wt}(\mathbf{b}),
$
gives
\[
W_{\mathcal{C}_{m,n}^{\perp}}(x,y)
=
\frac{1}{2}
\sum_{r=0}^{m}\binom{m}{r}
\sum_{t=0}^{n}\binom{n}{t}
x^{Z(r,t)}y^{O(r,t)}.
\]
Since
$
\left|\mathcal{C}_{m,n}^{\perp}\right|
=
2^{m+n-1},
$
the MacWilliams identity gives
\[
W_{\mathcal{C}_{m,n}}(x,y)
=
\frac{1}{\left|\mathcal{C}_{m,n}^{\perp}\right|}
W_{\mathcal{C}_{m,n}^{\perp}}(x+y,x-y).
\]
Substitution yields
\[
W_{\mathcal{C}_{m,n}}(x,y)
=
\frac{1}{2^{m+n}}
\sum_{r=0}^{m}\binom{m}{r}
\sum_{t=0}^{n}\binom{n}{t}
(x+y)^{Z(r,t)}
(x-y)^{O(r,t)}.
\]
Finally, setting $x=1$ and $y=z$ gives the one-variable expression.
\end{proof}

\begin{corollary}\label{cor:spc-product-coefficients}
Write
\[
W_{\mathcal{C}_{m,n}}(z)
=
\sum_{w=0}^{mn}A_wz^w.
\]
Then, for every $0\leq w\leq mn$,
\[
A_w
=
\frac{1}{2^{m+n}}
\sum_{r=0}^{m}\binom{m}{r}
\sum_{t=0}^{n}\binom{n}{t}
\sum_{u=\max\{0,w-Z(r,t)\}}^{\min\{w,O(r,t)\}}
(-1)^u
\binom{Z(r,t)}{w-u}
\binom{O(r,t)}{u}.
\]
\end{corollary}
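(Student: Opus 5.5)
The plan is to start from the one-variable form of Theorem~\ref{thm:spc-product-weight-enumerator} and extract the coefficient of $z^w$ term by term. Since the outer double sum over $r$ and $t$ is finite and its weights $\binom{m}{r}\binom{n}{t}/2^{m+n}$ do not depend on $z$, linearity of coefficient extraction gives
\[
A_w=\frac{1}{2^{m+n}}\sum_{r=0}^{m}\binom{m}{r}\sum_{t=0}^{n}\binom{n}{t}\,[z^w]\Bigl((1+z)^{Z(r,t)}(1-z)^{O(r,t)}\Bigr).
\]
It therefore suffices to compute, for fixed nonnegative integers $Z=Z(r,t)$ and $O=O(r,t)$, the coefficient $[z^w](1+z)^{Z}(1-z)^{O}$.

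Expanding both factors by the binomial theorem gives
\[
(1+z)^{Z}=\sum_{k=0}^{Z}\binom{Z}{k}z^k
\qquad\text{and}\qquad
(1-z)^{O}=\sum_{u=0}^{O}(-1)^u\binom{O}{u}z^u.
\]
The coefficient of $z^w$ in the product is then the Cauchy convolution $\sum_{k+u=w}\binom{Z}{k}(-1)^u\binom{O}{u}$. I will index this convolution by $u$ and set $k=w-u$. The conditions $0\le u\le O$ and $0\le w-u\le Z$ are exactly $\max\{0,w-Z\}\le u\le\min\{w,O\}$, which are the summation limits in the statement. If this range is empty, the coefficient is $0$, which matches the empty-sum convention. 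Substituting back yields the stated formula.

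The only point requiring care is the bookkeeping of the summation bounds. Both $Z(r,t)$ and $O(r,t)$ are nonnegative integers, since they count pairs $(i,j)$ with $a_i=b_j$ and $a_i\neq b_j$ respectively, so the binomial expansions are finite polynomial identities. I would check the endpoint cases where $Z(r,t)=0$ or $O(r,t)=0$, for instance $r=0$ and $t=n$. In those cases one factor is identically $1$, and the range collapses consistently: when $O=0$ only $u=0$ survives, giving $\binom{Z}{w}$. Beyond this check the argument is a direct coefficient comparison, and I expect no substantive obstacle.
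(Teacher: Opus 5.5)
Your proposal is correct and follows the paper's own proof: extract $[z^w]$ termwise from the one-variable form in Theorem~\ref{thm:spc-product-weight-enumerator}, then compute $[z^w](1+z)^{Z(r,t)}(1-z)^{O(r,t)}$ as a Cauchy convolution. Your derivation of the limits $\max\{0,w-Z(r,t)\}\le u\le\min\{w,O(r,t)\}$ from $0\le u\le O(r,t)$ and $0\le w-u\le Z(r,t)$ spells out the bookkeeping that the paper states without comment.
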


\begin{proof}
For fixed $r$ and $t$, the coefficient of $z^w$ in
$
(1+z)^{Z(r,t)}(1-z)^{O(r,t)}
$
is
\[
\sum_{u=\max\{0,w-Z(r,t)\}}^{\min\{w,O(r,t)\}}
(-1)^u
\binom{Z(r,t)}{w-u}
\binom{O(r,t)}{u}.
\]
Substituting this coefficient into
Theorem~\ref{thm:spc-product-weight-enumerator} proves the result.
\end{proof}

For $
\mathcal{C}_n
=
\operatorname{SPC}(n)\otimes\operatorname{SPC}(n),$
we have
\begin{align*}
Z_n(r,t) &= rt+(n-r)(n-t), \\
O_n(r,t) &= r(n-t)+(n-r)t.
\end{align*}
Hence,
\[
W_{\mathcal{C}_n}(z)
=
\frac{1}{2^{2n}}
\sum_{r=0}^{n}\binom{n}{r}
\sum_{t=0}^{n}\binom{n}{t}
(1+z)^{Z_n(r,t)}
(1-z)^{O_n(r,t)}.
\]
\begin{corollary}\label{cor:square-spc-weight-enumerator}
For
$\mathcal{C}_n
=\operatorname{SPC}(n)\otimes\operatorname{SPC}(n)$,
we have
\[
W_{\mathcal{C}_n}(z)
=
\frac{1}{2^{2n}}
\sum_{r=0}^{n}\binom{n}{r}
\sum_{t=0}^{n}\binom{n}{t}
(1+z)^{Z_n(r,t)}
(1-z)^{O_n(r,t)}.
\]
\end{corollary}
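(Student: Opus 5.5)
This corollary is the specialization $m=n$ of Theorem~\ref{thm:spc-product-weight-enumerator}, so the plan is to substitute rather than rederive. First, I note that $\mathcal{C}_n=\mathcal{C}_{n,n}$ by definition. Setting $m=n$ in $Z(r,t)$ and $O(r,t)$ gives $Z(n,\cdot)$-type expressions $rt+(n-r)(n-t)$ and $r(n-t)+(n-r)t$, which are exactly $Z_n(r,t)$ and $O_n(r,t)$. The prefactor $2^{-(m+n)}$ becomes $2^{-2n}$, and both binomial sums range over $0\le r,t\le n$ with weights $\binom{n}{r}\binom{n}{t}$. Substituting these into the one-variable formula of Theorem~\ref{thm:spc-product-weight-enumerator} yields the displayed expression verbatim.

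To make the argument self-contained at the level of hypotheses, I would check that nothing in Lemma~\ref{lem:dual-spc-product} or in the proof of Theorem~\ref{thm:spc-product-weight-enumerator} used $m\neq n$. The kernel computation $\ker(\Phi)=\{(\mathbf 0,\mathbf 0),(\mathbf 1_n,\mathbf 1_n)\}$ remains valid, so $|\mathcal{C}_n^\perp|=2^{2n-1}$. Likewise, the weight count $\operatorname{wt}(Y(\mathbf a,\mathbf b))=O_n(r,t)$ holds unchanged. Hence the specialization is legitimate for every $n\ge2$.

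There is no genuine obstacle here. The only point needing care is confirming that the identification $\mathcal{C}_n=\operatorname{SPC}(n)\otimes\operatorname{SPC}(n)$ with the matrix code holds up to the coordinate ordering used in Section~4, and weight enumerators are invariant under such permutations.

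As an optional consistency check, I would verify the formula on small cases. At $z=0$ it gives $1$, since $\sum\binom{n}{r}\binom{n}{t}=2^{2n}$. At $z=1$ only the terms with $O_n(r,t)=0$ survive, namely $(r,t)=(0,0)$ and $(n,n)$. These give $2\cdot 2^{n^2}/2^{2n}=2^{(n-1)^2}=|\mathcal{C}_n|$, as expected.
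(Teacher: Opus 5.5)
Your proposal is correct and matches the paper, which obtains the corollary by setting $m=n$ in Theorem~\ref{thm:spc-product-weight-enumerator}, so that $Z(r,t)$, $O(r,t)$ and $2^{-(m+n)}$ become $Z_n(r,t)$, $O_n(r,t)$ and $2^{-2n}$. Your sanity checks at $z=0$ (giving $A_0=1$) and $z=1$ (only $(r,t)=(0,0),(n,n)$ survive, giving $2^{(n-1)^2}$) are also correct, though not needed for the argument.
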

\begin{example}
Consider
$\mathcal{C}_3
=\operatorname{SPC}(3)\otimes\operatorname{SPC}(3)$,
which has parameters $[9,4,4]_2$. We compute the coefficient $A_4$.
For convenience, let
\[
S_4(Z,O)
=
[z^4](1+z)^Z(1-z)^O
\]
and define
\[
\mu_Z
=
\sum_{\substack{0\leq r,t\leq 3\\Z(r,t)=Z}}
\binom{3}{r}\binom{3}{t}.
\]
The relevant values are
\[
\begin{array}{c|c|c|c|c}
Z & O & \mu_Z & S_4(Z,O) & \mu_ZS_4(Z,O)\\
\hline
9 & 0 & 2  & 126 & 252\\
0 & 9 & 2  & 126 & 252\\
6 & 3 & 12 & -6  & -72\\
3 & 6 & 12 & -6  & -72\\
5 & 4 & 18 & 6   & 108\\
4 & 5 & 18 & 6   & 108
\end{array}
\]
Therefore,
\[
A_4
=
\frac{252+252-72-72+108+108}{2^6}
=
\frac{576}{64}
=
9.
\]
Applying the formula to all possible weights gives
\[
W_{\mathcal{C}_3}(z)
=
1+9z^4+6z^6.
\]
\end{example}

The expression in Theorem~\ref{thm:spc-product-weight-enumerator}
replaces an enumeration over
$2^{(m-1)(n-1)}$ codewords by a sum indexed only by the possible weights
of two auxiliary binary vectors.


\section{Computational Evaluation of the Weight Enumerator}

The closed-form expression obtained in
Theorem~\ref{thm:spc-product-weight-enumerator} determines every
coefficient of the weight enumerator of
$
\mathcal{C}_{m,n}
=
\operatorname{SPC}(m)\otimes\operatorname{SPC}(n).$
More precisely,
\[
A_w
=
\frac{1}{2^{m+n}}
\sum_{r=0}^{m}\binom{m}{r}
\sum_{t=0}^{n}\binom{n}{t}
\sum_{u=\max\{0,w-Z(r,t)\}}^{\min\{w,O(r,t)\}}
(-1)^u
\binom{Z(r,t)}{w-u}
\binom{O(r,t)}{u},
\]
where
\begin{align*}
Z(r,t) &= rt+(m-r)(n-t), \\
O(r,t) &= r(n-t)+(m-r)t.
\end{align*}

Although this expression avoids the enumeration of the
$2^{(m-1)(n-1)}$ codewords, a direct implementation still performs
many repeated calculations. Indeed, different pairs $(r,t)$ may produce
the same value of $O(r,t)$. These terms can be grouped before computing
the coefficients.

Let $N=mn$ and, for $0\leq j\leq N$, define
\[
\mu_j
=
\sum_{\substack{0\leq r\leq m,\ 0\leq t\leq n\\O(r,t)=j}}
\binom{m}{r}\binom{n}{t}.
\]
The coefficient formula can then be written as
\[
A_w
=
\frac{1}{2^{m+n}}
\sum_{j=0}^{N}
\mu_j K_w(j;N),
\]
where
\[
K_w(j;N)
=
\sum_{u=\max\{0,w-N+j\}}^{\min\{w,j\}}
(-1)^u
\binom{N-j}{w-u}
\binom{j}{u}
\]
is the binary Krawtchouk polynomial of degree $w$, evaluated at $j$ \cite{Krawtchouk1929}.
Equivalently,
\[
K_w(j;N)
=
[z^w](1+z)^{N-j}(1-z)^j.
\]

The Krawtchouk polynomials satisfy the recurrence
\[
K_0(j;N)=1,
\qquad
K_1(j;N)=N-2j,
\]
and
\[
(w+1)K_{w+1}(j;N)
=
(N-2j)K_w(j;N)
-
(N-w+1)K_{w-1}(j;N)
\]
for $1\leq w\leq N-1$. This recurrence avoids evaluating the
alternating binomial sum separately for every coefficient.
Algorithm~\ref{alg:spc-product-enumerator} describes the resulting
procedure.

\begin{algorithm}
\caption{Exact weight enumerator of
$\mathcal{C}_{m,n}=\operatorname{SPC}(m)\otimes\operatorname{SPC}(n)$}
\label{alg:spc-product-enumerator}
\begin{algorithmic}[1]

\State \textbf{Input:} Integers $m,n\geq2$
\State \textbf{Output:} The coefficients
$\{A_w\}_{w=0}^{N}$ and $W_{\mathcal{C}_{m,n}}(x,y)$

\State $N\gets mn$
\State Initialize $\mu[j]\gets 0$ for $j=0,\ldots,N$
\State Initialize $T[w]\gets 0$ for $w=0,\ldots,N$

\Statex
\Comment{Group the pairs $(r,t)$ according to $O(r,t)$}
\For{$r=0$ \textbf{to} $m$}
    \For{$t=0$ \textbf{to} $n$}
        \State $Z\gets rt+(m-r)(n-t)$
        \State $O\gets N-Z$
        \State $\mu[O]\gets
        \mu[O]+\binom{m}{r}\binom{n}{t}$
    \EndFor
\EndFor

\Statex
\Comment{Evaluate the Krawtchouk polynomials recursively}
\For{$j=0$ \textbf{to} $N$}
    \If{$\mu[j]\neq 0$}
        \State $K_{\mathrm{prev}}\gets 1$
        \State $T[0]\gets T[0]+\mu[j]K_{\mathrm{prev}}$

        \If{$N\geq 1$}
            \State $K_{\mathrm{curr}}\gets N-2j$
            \State $T[1]\gets T[1]+\mu[j]K_{\mathrm{curr}}$

            \For{$w=1$ \textbf{to} $N-1$}
                \State
                $K_{\mathrm{next}}\gets
                \dfrac{
                (N-2j)K_{\mathrm{curr}}
                -(N-w+1)K_{\mathrm{prev}}
                }{w+1}$

                \State
                $T[w+1]\gets
                T[w+1]+\mu[j]K_{\mathrm{next}}$

                \State $K_{\mathrm{prev}}\gets K_{\mathrm{curr}}$
                \State $K_{\mathrm{curr}}\gets K_{\mathrm{next}}$
            \EndFor
        \EndIf
    \EndIf
\EndFor

\Statex
\Comment{Normalize the coefficients}
\For{$w=0$ \textbf{to} $N$}
    \State $A[w]\gets T[w]/2^{m+n}$
\EndFor

\Statex
\Comment{Construct the homogeneous weight enumerator}
\State $W_{\mathcal{C}_{m,n}}(x,y)\gets 0$

\For{$w=0$ \textbf{to} $N$}
    \If{$A[w]\neq 0$}
        \State
        $W_{\mathcal{C}_{m,n}}(x,y)
        \gets
        W_{\mathcal{C}_{m,n}}(x,y)
        +A[w]x^{N-w}y^w$
    \EndIf
\EndFor

\State \Return
$W_{\mathcal{C}_{m,n}}(x,y),\{A_w\}_{w=0}^{N}$

\end{algorithmic}
\end{algorithm}
After precomputing the binomial coefficients
$\binom{m}{r}$ and $\binom{n}{t}$, the multiplicities $\mu_j$ can be
computed using $O(mn)=O(N)$ arithmetic operations. The Krawtchouk
recurrence requires at most $O(N^2)$ additional arithmetic operations,
and the algorithm uses $O(N)$ auxiliary memory. These estimates count
exact integer arithmetic operations; since the coefficients can be
very large, arbitrary-precision integer arithmetic must be used.

Several properties provide consistency checks for the implementation.
First, the multiplicities satisfy
\[
\sum_{j=0}^{N}\mu_j
=
2^{m+n},
\]
since they count all pairs
$(\mathbf{a},\mathbf{b})\in
\mathbb{F}_2^m\times\mathbb{F}_2^n$.
Moreover, every $\mu_j$ is even because the pairs
$
(\mathbf{a},\mathbf{b})
$ and $
(\mathbf{a}+\mathbf{1}_m,\mathbf{b}+\mathbf{1}_n)
$
produce the same dual codeword.

For the resulting weight distribution, we must have
$
A_0=1
$
and
\[
\sum_{w=0}^{N}A_w
=
2^{(m-1)(n-1)}.
\]
Moreover, every codeword has even weight, and hence
$
A_w=0
$
whenever $w$ is odd. Finally, if both $m$ and $n$ are even, the
all-one matrix belongs to $\mathcal{C}_{m,n}$, and therefore
\[
A_w=A_{N-w}.
\]


\subsection{Computational examples}
\subsubsection*{The code
$\operatorname{SPC}(3)\otimes\operatorname{SPC}(3)$}

This code has parameters $[9,4,4]_2$ and weight enumerator
\[
W_{\mathcal{C}_{3}}(z)
=
1+9z^4+6z^6.
\]
In particular,
$
A_0=1,
A_4=9,
A_6=6,$
and
$ 1+9+6=16=2^4.$

\subsubsection*{The code
$\operatorname{SPC}(4)\otimes\operatorname{SPC}(4)$}

This code has parameters $[16,9,4]_2$ and weight enumerator
\[
\begin{aligned}
W_{\mathcal{C}_{4}}(z)
={}&
1+36z^4+96z^6+246z^8+96z^{10}+36z^{12}+z^{16}.
\end{aligned}
\]
The sum of its coefficients is
\[
1+36+96+246+96+36+1
=
512
=
2^9.
\]
Since $4$ is even,  the
coefficients satisfy
$
A_w=A_{16-w}.$

\subsubsection*{The code
$\operatorname{SPC}(10)\otimes\operatorname{SPC}(10)$}

This code has parameters $[100,81,4]_2$. Some of the coefficients
obtained by Algorithm~\ref{alg:spc-product-enumerator} are shown in
Table~\ref{tab:spc10-coefficients}.

\begin{table}[htbp]
\centering
\caption{Selected coefficients of the weight enumerator of
$\operatorname{SPC}(10)\otimes\operatorname{SPC}(10)$.}
\label{tab:spc10-coefficients}
\begin{tabular}{c|r|c|r}
\toprule
$w$ & $A_w$ & $100-w$ & $A_{100-w}$\\
\midrule
0  & $1$                 & 100 & $1$\\
4  & $2{,}025$           & 96  & $2{,}025$\\
6  & $86{,}400$          & 94  & $86{,}400$\\
8  & $4{,}895{,}100$     & 92  & $4{,}895{,}100$\\
10 & $213{,}615{,}360$   & 90  & $213{,}615{,}360$\\
12 & $7{,}987{,}574{,}700$
   & 88 & $7{,}987{,}574{,}700$\\
14 & $246{,}565{,}468{,}800$
   & 86 & $246{,}565{,}468{,}800$\\
16 & $6{,}238{,}507{,}153{,}050$
   & 84 & $6{,}238{,}507{,}153{,}050$\\
\bottomrule
\end{tabular}
\end{table}

The symmetry $A_w=A_{100-w}$ follows from the fact that $n$ is even (see Corollary~\ref{cor:symmetric-weight-enumerator-spc-product}).  Moreover,
the complete distribution satisfies
\[
\sum_{w=0}^{100}A_w
=
2^{81},
\]
providing an additional verification of the computation.

\section{Conclusion}

In this paper, we investigated structural and enumerative properties of
the binary single parity-check product codes
$
\mathcal{C}_{m,n}
=
\operatorname{SPC}(m)\otimes\operatorname{SPC}(n).$

By applying the product formula for codes satisfying the chain
condition, we obtained an explicit description of their generalized
Hamming weight hierarchy. For the square product
$\mathcal{C}_n$, we also determined the maximum
codeword weight and proved that its homogeneous weight enumerator is
symmetric if and only if $n$ is even.

The main result is an exact closed-form expression for the weight
enumerator of $\mathcal{C}_{m,n}$. After characterizing the dual code,
we applied the MacWilliams identity in its Walsh--Hadamard formulation.
The dual codewords were parametrized by two auxiliary binary vectors,
and grouping the corresponding terms according to the Hamming weights
of these vectors led to explicit formulas for both the weight
enumerator and each of its coefficients.

The structure of the coefficient formula also yields an exact
computational procedure based on Krawtchouk polynomials, avoiding the
exhaustive enumeration of the
$2^{(m-1)(n-1)}$ codewords. The numerical examples and consistency
checks verify the resulting weight distributions for several component
lengths.

The Walsh--Hadamard approach developed here provides a unified
framework for studying the finite-length weight distribution of
two-dimensional SPC product codes. Possible directions for further
research include extending the method to higher-dimensional SPC
products, investigating analogous formulas for other component codes,
and further exploring the connection between the coefficient formula
and Krawtchouk transforms.

\section*{Acknowledgments}
The first author was supported by Conselho Nacional de Desenvolvimento Científico e Tecnológico (CNPq), Brazil with number of process 141533/2021-8.
The second author  was also supported by CNPq with process 405842/2023-6 and by FAPESP with processes 2024/05051-7 and 2024/00923-6.

\bibliography{sn-bibliography} 
\end{document}